\documentclass[11pt]{article}
\usepackage[margin=1in,left=1in]{geometry}
\usepackage{amsmath}
\usepackage{amsfonts}
\usepackage{amssymb}
\usepackage{amsthm}
\usepackage{graphicx}
\usepackage{hyperref}
\usepackage[all]{xy}

%
%
%



%
\makeatletter
\newdimen\itex@wd%
\newdimen\itex@dp%
\newdimen\itex@thd%
\def\itexspace#1#2#3{\itex@wd=#3em%
\itex@wd=0.1\itex@wd%
\itex@dp=#2ex%
\itex@dp=0.1\itex@dp%
\itex@thd=#1ex%
\itex@thd=0.1\itex@thd%
\advance\itex@thd\the\itex@dp%
\makebox[\the\itex@wd]{\rule[-\the\itex@dp]{0cm}{\the\itex@thd}}}
\makeatother

\makeatletter
\newif\if@sup
\newtoks\@sups
\def\append@sup#1{\edef\act{\noexpand\@sups={\the\@sups #1}}\act}%
\def\reset@sup{\@supfalse\@sups={}}%
\def\mk@scripts#1#2{\if #2/ \if@sup ^{\the\@sups}\fi \else%
  \ifx #1_ \if@sup ^{\the\@sups}\reset@sup \fi {}_{#2}%
  \else \append@sup#2 \@suptrue \fi%
  \expandafter\mk@scripts\fi}
\def\tensor#1#2{\reset@sup#1\mk@scripts#2_/}
\def\multiscripts#1#2#3{\reset@sup{}\mk@scripts#1_/#2%
  \reset@sup\mk@scripts#3_/}
\makeatother

\makeatletter
\newbox\slashbox \setbox\slashbox=\hbox{$/$}
\def\itex@pslash#1{\setbox\@tempboxa=\hbox{$#1$}
  \@tempdima=0.5\wd\slashbox \advance\@tempdima 0.5\wd\@tempboxa
  \copy\slashbox \kern-\@tempdima \box\@tempboxa}
\def\slash{\protect\itex@pslash}
\makeatother

\def\clap#1{\hbox to 0pt{\hss#1\hss}}

\let\oldroot\root
\def\root#1#2{\oldroot #1 \of{#2}}
\renewcommand{\sqrt}[2][]{\oldroot #1 \of{#2}}

\DeclareSymbolFont{symbolsC}{U}{txsyc}{m}{n}
\SetSymbolFont{symbolsC}{bold}{U}{txsyc}{bx}{n}
\DeclareFontSubstitution{U}{txsyc}{m}{n}

\DeclareSymbolFont{stmry}{U}{stmry}{m}{n}
\SetSymbolFont{stmry}{bold}{U}{stmry}{b}{n}

\DeclareFontFamily{OMX}{MnSymbolE}{}
\DeclareSymbolFont{mnomx}{OMX}{MnSymbolE}{m}{n}
\SetSymbolFont{mnomx}{bold}{OMX}{MnSymbolE}{b}{n}
\DeclareFontShape{OMX}{MnSymbolE}{m}{n}{
    <-6>  MnSymbolE5
   <6-7>  MnSymbolE6
   <7-8>  MnSymbolE7
   <8-9>  MnSymbolE8
   <9-10> MnSymbolE9
  <10-12> MnSymbolE10
  <12->   MnSymbolE12}{}

\makeatletter
\def\re@DeclareMathSymbol#1#2#3#4{%
    \let#1=\undefined
    \DeclareMathSymbol{#1}{#2}{#3}{#4}}
\re@DeclareMathSymbol{\neArrow}{\mathrel}{symbolsC}{116}
\re@DeclareMathSymbol{\neArr}{\mathrel}{symbolsC}{116}
\re@DeclareMathSymbol{\seArrow}{\mathrel}{symbolsC}{117}
\re@DeclareMathSymbol{\seArr}{\mathrel}{symbolsC}{117}
\re@DeclareMathSymbol{\nwArrow}{\mathrel}{symbolsC}{118}
\re@DeclareMathSymbol{\nwArr}{\mathrel}{symbolsC}{118}
\re@DeclareMathSymbol{\swArrow}{\mathrel}{symbolsC}{119}
\re@DeclareMathSymbol{\swArr}{\mathrel}{symbolsC}{119}
\re@DeclareMathSymbol{\nequiv}{\mathrel}{symbolsC}{46}
\re@DeclareMathSymbol{\Perp}{\mathrel}{symbolsC}{121}
\re@DeclareMathSymbol{\Vbar}{\mathrel}{symbolsC}{121}
\re@DeclareMathSymbol{\sslash}{\mathrel}{stmry}{12}
\re@DeclareMathSymbol{\bigsqcap}{\mathop}{stmry}{"64}
\re@DeclareMathSymbol{\biginterleave}{\mathop}{stmry}{"6}
\re@DeclareMathSymbol{\invamp}{\mathrel}{symbolsC}{77}
\re@DeclareMathSymbol{\parr}{\mathrel}{symbolsC}{77}
\makeatother

\makeatletter
\def\Decl@Mn@Delim#1#2#3#4{%
  \if\relax\noexpand#1%
    \let#1\undefined
  \fi
  \DeclareMathDelimiter{#1}{#2}{#3}{#4}{#3}{#4}}
\def\Decl@Mn@Open#1#2#3{\Decl@Mn@Delim{#1}{\mathopen}{#2}{#3}}
\def\Decl@Mn@Close#1#2#3{\Decl@Mn@Delim{#1}{\mathclose}{#2}{#3}}
\Decl@Mn@Open{\llangle}{mnomx}{'164}
\Decl@Mn@Close{\rrangle}{mnomx}{'171}
\Decl@Mn@Open{\lmoustache}{mnomx}{'245}
\Decl@Mn@Close{\rmoustache}{mnomx}{'244}
\makeatother

\makeatletter
\DeclareRobustCommand\widecheck[1]{{\mathpalette\@widecheck{#1}}}
\def\@widecheck#1#2{%
    \setbox\z@\hbox{\m@th$#1#2$}%
    \setbox\tw@\hbox{\m@th$#1%
       \widehat{%
          \vrule\@width\z@\@height\ht\z@
          \vrule\@height\z@\@width\wd\z@}$}%
    \dp\tw@-\ht\z@
    \@tempdima\ht\z@ \advance\@tempdima2\ht\tw@ \divide\@tempdima\thr@@
    \setbox\tw@\hbox{%
       \raise\@tempdima\hbox{\scalebox{1}[-1]{\lower\@tempdima\box
\tw@}}}%
    {\ooalign{\box\tw@ \cr \box\z@}}}
\makeatother


\makeatletter
\def\udots{\mathinner{\mkern2mu\raise\p@\hbox{.}
\mkern2mu\raise4\p@\hbox{.}\mkern1mu
\raise7\p@\vbox{\kern7\p@\hbox{.}}\mkern1mu}}
\makeatother




\newcommand{\gt}{>}

\newcommand{\R}{\ensuremath{\mathbb R}}
\newcommand{\N}{\ensuremath{\mathbb N}}
\newcommand{\Z}{\ensuremath{\mathbb Z}}
\newcommand{\Q}{\ensuremath{\mathbb Q}}

\renewcommand{\(}{\begin{equation*}}
\renewcommand{\)}{\end{equation*}}
\newcommand{\bea}{\begin{eqnarray*}}
\newcommand{\eea}{\end{eqnarray*}}

\theoremstyle{italics}
\newtheorem{theorem}{Theorem}[section]
\newtheorem{lemma}[theorem]{Lemma}
\newtheorem{prop}[theorem]{Proposition}
\newtheorem{cor}[theorem]{Corollary}

\theoremstyle{definition}
\newtheorem{defn}[theorem]{Definition}
\newtheorem{example}[theorem]{Example}

\theoremstyle{remark}
\newtheorem{remark}[theorem]{Remark}
\newtheorem{note[theorem]}{Note}


\begin{document}

%

\title{
Rational sphere valued
supercocycles
 in M-theory \\
 and type IIA string theory}

 \author{Domenico Fiorenza\thanks{Dipartimento di Matematica, La Sapienza Universit \`a di Roma
Piazzale Aldo Moro 2, 00185 Rome, Italy}, \;
     Hisham Sati\thanks{University of Pittsburgh, Pittsburgh, PA 15260, USA, and New York University,
     Abu Dhabi, UAE}, \;
     Urs Schreiber\thanks{Mathematics Institute of the Academy, {\v Z}itna 25, 115 67 Praha 1, Czech Republic; on leave at MPI Bonn}
     }

\maketitle

\begin{abstract}
We show that supercocycles on super $L_\infty$-algebras capture, at the rational level,
the twisted cohomological charge structure of the fields of M-theory and of type IIA string theory.
We show that rational 4-sphere-valued supercocycles for M-branes in M-theory descend
to supercocycles in type IIA string theory with coefficients in the free loop space of the 4-sphere,
to yield the Ramond-Ramond fields in the rational image of twisted K-theory, with the twist given by the B-field.
In particular, we derive the M2/M5 $\leftrightarrow$ F1/Dp/NS5 correspondence via dimensional reduction
of sphere-valued super-$L_\infty$-cocycles.

\end{abstract}

 \tableofcontents

\section{Introduction}

(Super)cocycles play an important role in the study of the geometric and topological
structures associated with physical theories (see \cite{dAI} for an earlier survey).  In \cite{Bouq}
we discussed cocycles of super $L_\infty$-algebras (super Lie $n$-algebras for arbitrary $n$)
forming the \emph{brane bouquet} that gives the WZW terms of all the
Green-Schwarz sigma models for all the branes in string theory and M-theory. This includes those with gauge fields
on their worldvolume, the D-branes and the M5-brane, which were missing in the classical \emph{brane scan}.

\medskip
In \cite{FSS15} we had shown that this approach allows deriving the rational image of a
twisted cohomology theory that unifies the M2-brane charges and the M5-brane charges (this is recalled in section \ref{Sec M} below).
Rationally this cohomology theory turns out to be represented by the 4-sphere, hence is \emph{cohomotopy} in degree 4.
This is in higher analogy to the familiar statement that the unification of D$p$-brane charges with the F1-brane charge
ought to be in twisted K-cohomology theory.
(That the fields of M-theory should take values in the 4-sphere was first suggested in
\cite{S1, S2}.)

\medskip
In section \ref{Sec red} we show, at the rational level, that indeed the twisted
M2/M5 charges in degree-4 cohomotopy in 11 dimensions dimensionally reduce to the twisted K-theory
of the F1/D$p$/NS5-brane charges in 10 dimensions (for $p \in \{0,2,4\}$), where the
dimensionally reduced cohomology theory is represented
by the rationalization of the homotopy quotient $\mathcal{L} S^4 /\!/ S^1$ of the free loop space of the 4-sphere.
In particular this exhibits a purely $L_\infty$-theoretic derivation, at the rational level,
of twisted K-theory as the home of the brane charges in type II string theory. The lift of this twisted charge
structure to M-theory has been an open problem. This may be viewed as one confirmation at the rational level of the
proposal in \cite{S1, S2} on the description of M-theory via twisted generalized cohomology.

\medskip
In the existing literature, the cocycles for the WZW terms of the D$p$-branes are instead constructed
separately as independent cocycles on extended super-Minkowski spacetime (see \cite{FSS15} for
references and for the super $L_\infty$-algebraic formulation). In section \ref{Sec IIA} we show that
the same $L_\infty$-descent mechanism which unifies the M2- and M5-brane charges also applies to the
separate D$p$-brane cocycles, and they descend to again a single cocycle with coefficients in (the rational image of) the relevant
truncation of  twisted K-theory.


\medskip
The techniques that we use are from geometric homotopy theory \cite{dcct}, cast in computationally
powerful algebraic language. Lecture notes accompanying the discussion here
may be found in \cite{StructureTheory}.
 We consider super $L_\infty$-algebras as in \cite{SSS09, Bouq}. These are a generalizations of
super Lie algebras to super Lie $n$-algebras, for arbitrary $n$, where instead of just a super Lie bracket we have brackets of all arities with
the Lie bracket being the binary one. More precisely, our construction takes place in the homotopy category of
super $L_\infty$-algebras, so that a morphism from a super $L_\infty$-algebra $\mathfrak{g}$ to a super  $L_\infty$-algebra $\mathfrak{h}$ will actually be a span of morphism
\[
\mathfrak{g}\xleftarrow{\sim} \tilde{\mathfrak{g}}\to \mathfrak{h}
\]
where $\tilde{\mathfrak{g}}\xrightarrow{\sim} \mathfrak{g}$ is a quasi-isomorphism, i.e.,
an $L_\infty$-morphism inducing an isomorphism of graded vector spaces at the level of cohomology
from $H^\bullet(\tilde{\mathfrak{g}})$ to $H^\bullet(\mathfrak{g})$. Passing from $\mathfrak{g}$ to $\tilde{\mathfrak{g}}$ is an example of resolution. This concept has many incarnations, depending on
the context (homotopic, fibrant, cofibrant, projective, injective). For us, what is important
is that is is a concept of equivalence within a category
between the object at hand and another (or a combination of such)
 that generally behaves in a more utilizable way within the same category.

\medskip
Furthermore, we will make constant use of the duality between (finite type) super $L_\infty$-algebras
and
differential graded-commutative super-algebras, identifying a super $L_\infty$-algebra $\mathfrak{g}$ with its
Chevalley-Eilenberg algebra ${\rm CE}(\mathfrak{g})$ as in \cite{SSS09}.
These Chevalley-Eilenberg algebras of super $L_\infty$-algebras are what are called \emph{FDA}s
in the supergravity literature (going back to \cite{DF}). The point of identifying these
as dual to super $L_\infty$-algebras is to make manifest their higher gauge theoretic nature
and the relevant homotopy theory, which is crucial for the results we present here.
For instance, for $p\in \N$, the line $(p+2)$-algebra $b^{p+1}\R$, i.e., the chain complex with $\R$ in degree
$p+1$ and zeros everywhere else, corresponds to the
Chevalley-Eilenberg algebra
$$
{\rm CE}(b^{p+1}\R):=\big( \mathbb{R}[ g_{p+2}];~ dg_{p+2}=0\big)\;,
$$
where the generator $g_{p+2}$ has degree $p+2$.

\medskip
Notice that ${\rm CE}(b^{p+1}\R)$ is the minimal Sullivan model for the rational space $B^{p+2}\mathbb{R}$,
reflecting the fact that $b^{p+1}\R$ is the $L_\infty$-algebra corresponding to the $\infty$-group
$B^{p+1}\R \simeq \Omega B^{p+2} \R$.
In order to amplify this relation between $L_\infty$-algebras and rational homotopy theory, we also write
$\mathfrak{l}(X)$, or simply $\mathfrak{l}X$, for the $L_\infty$-algebra whose CE-algebra is a given Sullivan model of finite type for some rational space $X$ :
$$
  \mathfrak{l}(X) \;=\;
  \mbox{$L_\infty$-algebra dual to given Sullivan model $(A_X, d_X)$ for rationalization of X}
$$
i.e.
$$
  \mathrm{CE}(\mathfrak{l}(X)) := (A_X,d_X)
  \,.
$$
See Appendix A for more details on rational homotopy theory and Sullivan models.
For example, with this notation then the rationalized spheres $S^n$ are incarnated as
$$
  \mathrm{CE}(\mathfrak{l}S^n)
  =
  \left\{
    \begin{array}{ll}
      ( \mathbb{R}[g_{n}] , d g_n = 0 ) & \mbox{for $n$ odd}
      \\
      ( \mathbb{R}[g_n, g_{2n-1}], d g_n = 0,\; d g_{2n-1} = g_n \wedge g_n ) & \mbox{for $n \gt 0$ even}.
    \end{array}
  \right.
$$
A convenient feature of the dual picture is the following:
if $\mathrm{CE}(\mathfrak{h})\to \mathrm{CE}(\mathfrak{g})$ is a relative Sullivan algebra, that is,
a cofibration in the standard model structure on differential graded commutative algebras (DGCAs),
then the corresponding $L_\infty$-morphism $\mathfrak{g}\to\mathfrak{h}$ is a fibration in the model
 structure whose fibrant objects are $L_\infty$-algebras, due to \cite[prop. 4.36, prop. 4.42]{pridham}. Although relative Sullivan algebras do not exhaust fibrations of $L_\infty$-algebras, they are flexible enough to allow us to realize all the fibrations we will need in the present article as relative Sullivan algebras. See \cite{nLab} for more on the homotopy theory of $L_\infty$-algebras as a category of fibrant objects.

\medskip
The model structure whose fibrant objects are $L_\infty$-algebras in \cite{pridham} is for ordinary $L_\infty$-algebras, not for super $L_\infty$-algebras that we consider here. Nevertheless, the result is readily adapted: A super $L_\infty$-algebras $\mathfrak{g}$
determines a functor $\Lambda \mapsto (\mathfrak{g}\otimes \Lambda)_{\mathrm{even}}$
with values in ordinary $L_\infty$-algebras on the category of finitely generated Grassmann algebras $\Lambda$,
and this construction embeds super $L_\infty$-algebras into this functor category.
(For super Lie algebras this was observed in \cite{Schwarz84}, see \cite{KonechnySchwarz97} and \cite[Cor. 3.3]{Sachse08}).
Now, by  \cite[Theorem 4.35]{pridham}, the opposite model structure for ordinary
$L_\infty$-algebras is cofibrantly generated, and so a standard argument
\cite[section 11.6]{Hirschhorn} gives that this functor category
inherits the corresponding projective model structure.  That is the model structure in which the
computations in this paper take place. However, we need to invoke only a bare minimum of model category theory;
all we use is the computation of homotopy fibers as ordinary fibers of fibration resolutions.
In the following we will find it very useful to succinctly capture results via (commuting) diagrams.
We will use the notation ${\rm hofib}(\phi)$ to indicate the homotopy fiber of a morphism
$\phi$.

\medskip

The spacetimes that we consider now are extended flat superspaces.
(All constructions here globalize from these local models to curved superspacetime by a theory of higher Cartan geometry,
see \cite{StructureTheory} and the references given there.)
Super Minkowski spacetime $\R^{d-1,1\vert N}$ may be identified with its super Lie algebra of (super-)translations. Via the super DG-Lie algebras/super DG-commutative algebras duality, it corresponds to the super DGCA
(differential graded commutative algebra) ${\rm CE}(\R^{d-1,1\vert N})$ which is the
super-DGCA generated by elements $\{e^a\}$ 
of degree $(1, {\rm even})$ and elements
 $\{\psi^a\}$ of degree $(1, {\rm odd})$. The 
 action of the differential is
 given as
 \bea
 d_{{}_{\rm CE}} e^a&=&
 \overline{\psi} \Gamma^a \psi\;,
 \\
 d_{{}_{\rm CE}} \psi &=& 0\;,
 \eea
where $\overline{\psi}$ is the conjugate spinor (whenever defined, depending on dimension).
Geometrically, these generators may be identified with the left invariant 1-forms on super Minkowski spacetime.
We will take appropriate values of $N$ depending on our theories, namely $N = \mathbf{32}$ for M-theory
and $N= \mathbf{16} + \overline{\mathbf{16}}$ for non-chiral type IIA superstring theory. For details and references
we refer to \cite[Section 4]{Bouq}.

\medskip
For every $p\geq 0$ one has a distinguished element $\mu_{p+2}$ in the Chevalley-Eilenberg
algebra ${\rm CE}(\R^{d-1,1\vert N})$, given by
$$
\mu_{p+2}
  :=
 c \overline{\psi} \wedge \Gamma^{a_1 \cdots a_p} \psi \wedge e_{a_1}
\wedge \cdots \wedge e_{a_p}\,,
$$
where $c = 1$ if $(-1)^{p(p-1)/2}$ is even, and $c = i$ otherwise.


\medskip
The organization of the paper is very simple.
In Sec. \ref{Sec M} we discuss the unified supercocycles in M-theory,
then reduce these to type IIA supercocycles in Sec. \ref{Sec red}.
We connect the result to the traditional incarnation of the D-brane cocycles in Sec. \ref{Sec IIA}. In two short Appendices, to make the article more self-contained, we recall a few basic notions from rational homotopy theory and the spinor conventions used in the present article.

\section{The supercocycles in M-theory}
\label{Sec M}

We consider now the cocycles in the brane bouquet \cite{Bouq} that give the WZW term of the Green-Schwarz sigma
model for the M2-brane and the M5-brane, defined on the extended super Minkowski spacetime induced from the cocycle for
the M2-brane. Then we recall \cite{FSS15} how it descends down to 11-dimensional super-Minkowski spacetime
itself, unifying with the M2-cocycle to one single cocycle, but now with coefficient in the rational 4-sphere.

\medskip
The key algebraic fact that governs the M2/M5-brane is the following statement about the
elements $\mu_{p+2}$ from above:
 \begin{prop}[{\cite[(3.26)]{DF}}]
 \label{prop41}
 The elements $\mu_4$ and $\mu_7$ in ${\rm CE}(\R^{10, 1\vert {\bf 32}})$ satisfy
 $$
 d\mu_4=0\;, \qquad d\mu_7=15 \, \mu_4 \wedge \mu_4\;.
 $$
 \end{prop}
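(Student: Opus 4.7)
My plan is to reduce both identities to Fierz identities for triple products of Majorana spinors in $D=11$. The key structural observation is that $d\psi = 0$, so by the Leibniz rule $d\mu_{p+2}$ receives contributions only from differentiating the vielbein factors $e_{a_i}$, each of which produces a factor $\bar\psi\,\Gamma^{a_i}\psi$. Consequently $d\mu_4$ and $d\mu_7$ are both cubic in $\psi$ wedged with a string of vielbeins, and the proposition reduces to purely algebraic identities among spinor bilinears in the CE algebra of $\R^{10,1\vert \mathbf{32}}$.

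First I would expand $d\mu_4 = i\,d\bigl(\bar\psi\,\Gamma^{a_1 a_2}\psi \wedge e_{a_1}\wedge e_{a_2}\bigr)$ directly. Using $de_a = \bar\psi\,\Gamma_a \psi$ together with the antisymmetry of $\Gamma^{ab}$ in $a,b$, the two terms arising from Leibniz combine into a single expression proportional to $(\bar\psi\,\Gamma^{ab}\psi)\wedge(\bar\psi\,\Gamma_a\psi)\wedge e_b$. This expression vanishes identically by the classical $D=11$ Fierz identity $(\bar\psi\,\Gamma^{ab}\psi)\wedge(\bar\psi\,\Gamma_a\psi) = 0$, which is equivalent to the vanishing of the totally symmetric projection of $(\Gamma_a)_{(\alpha\beta}(\Gamma^{ab})_{\gamma\delta)}$ on the symmetric cube of the $\mathbf{32}$-representation. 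This is the standard Fierz identity that makes the M2-brane WZW cocycle well-defined.

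For $d\mu_7$ the same Leibniz procedure produces, after using the antisymmetry of $\Gamma^{a_1\cdots a_5}$, a single term proportional to $(\bar\psi\,\Gamma^{a_1 a_2 a_3 a_4 b}\psi)\wedge(\bar\psi\,\Gamma_b\psi)\wedge e_{a_1}\wedge\cdots\wedge e_{a_4}$. The corresponding $D=11$ Fierz identity now does not vanish but instead expresses this contracted cubic spinor as a (nonzero) multiple of $(\bar\psi\,\Gamma^{[a_1 a_2}\psi)\wedge(\bar\psi\,\Gamma^{a_3 a_4]}\psi)$. Substituting this and recognizing the resulting 8-form as $\mu_4\wedge\mu_4$ is then mechanical.

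The main obstacle is the coefficient bookkeeping. To extract the precise factor $15$, I would expand the symmetric part of $\psi\bar\psi$ in the 11D basis $\{C\Gamma^a,\,C\Gamma^{ab},\,C\Gamma^{a_1\cdots a_5}\}$ of symmetric spinor bilinears, and then evaluate the Clifford-algebra product $\Gamma^{a_1\cdots a_5}\Gamma^b$ and isolate the $C\Gamma^{cd}$ channel. This is the concrete computation recorded as equation (3.26) in D'Auria--Fr\'e, and verifying the proposition amounts to checking that the normalization conventions adopted here for $\mu_4$ and $\mu_7$ reproduce the stated coefficient.
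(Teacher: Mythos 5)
Your plan is sound and is essentially the intended argument: the paper gives no proof of Proposition \ref{prop41} but quotes it from \cite[(3.26)]{DF}, which is exactly the Fierz-identity computation you outline (closure of $\mu_4$ from the $D=11$ identity $(\overline{\psi}\Gamma^{ab}\psi)\wedge(\overline{\psi}\Gamma_a\psi)=0$, and $d\mu_7\propto\mu_4\wedge\mu_4$ from the identity relating $(\overline{\psi}\Gamma^{a_1\cdots a_4 b}\psi)\wedge(\overline{\psi}\Gamma_b\psi)$ to $(\overline{\psi}\Gamma^{[a_1 a_2}\psi)\wedge(\overline{\psi}\Gamma^{a_3 a_4]}\psi)$), with the factor $15$ indeed hinging on this paper's normalization $d e^a=\overline{\psi}\Gamma^a\psi$ without the $\tfrac{1}{2}$ of the supergravity conventions (Remark \ref{dea}). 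One cosmetic correction: since each $\mu_{p+2}$ is already quadratic in $\psi$ and $d e^a$ contributes two more $\psi$'s, the expressions $d\mu_4$ and $d\mu_7$ are quartic, not cubic, in $\psi$, so the relevant Fierz identities are statements about the totally symmetric \emph{fourth} power of $\mathbf{32}$ rather than the symmetric cube --- this does not affect the identities you actually display or the structure of the argument.
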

 \begin{remark}
The statement of Prop. \ref{prop41} has been rediscovered, in its equivalent incarnation
given below in Corollary \ref{M5cocycleOnM2braneExtension},
in various places, including \cite{BLNPST} and \cite[(8.8)]{CAIB99}, where it was understood as giving the WZW term of the
Green-Schwarz sigma model for the M5-brane on the extended super Minkowski spacetime induced by the WZW term of the M4-brane.
Our Proposition \ref{prop43} below says that this stagewise incarnation of the M5-cocycle on the extension defined by the
M2-cocycle descends to one unified cocycle with coefficients in the rational 4-sphere.

\end{remark}
In terms of $L_\infty$-algebras Proposition \ref{prop41} says the following:
\begin{cor}
\label{mu4-mu7IsS4Cocycle}
The pair $(\mu_4,\mu_7)$ equivalently constitutes the components of an $L_\infty$-morphism
\[
  (\mu_4, \mu_7) : \R^{10, 1\vert {\bf 32}} \to \mathfrak{l}(S^4)\;,
\]
namely, dually, the components of a DG-algebra homomorphism
\begin{align*}
\mathrm{CE}(\mathfrak{l}S^4)&\to {\rm CE}(\R^{10, 1\vert {\bf 32}})\\
g_4&\mapsto \mu_4\\
g_7&\mapsto \tfrac{1}{15}\mu_7\;.
\end{align*}
\end{cor}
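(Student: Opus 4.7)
The plan is to use the duality between super $L_\infty$-algebras and super DGCAs recalled in the introduction: an $L_\infty$-morphism $\R^{10,1\vert {\bf 32}} \to \mathfrak{l}(S^4)$ is by definition the same datum as a DGCA homomorphism $\mathrm{CE}(\mathfrak{l}S^4) \to \mathrm{CE}(\R^{10,1\vert {\bf 32}})$ in the opposite direction. Thus the corollary reduces to producing such a DGCA map with the prescribed behavior on generators, which makes the whole statement a straightforward translation of Proposition \ref{prop41}.

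First I would recall, from the explicit description of $\mathrm{CE}(\mathfrak{l}S^n)$ in the even case given in the introduction, that for $n=4$ the source algebra is the free graded-commutative $\R$-algebra on two generators $g_4$ (degree $4$, even) and $g_7$ (degree $7$, odd), with differential determined by $dg_4 = 0$ and $dg_7 = g_4 \wedge g_4$. Since it is free as a graded-commutative algebra, a graded-algebra map out of it is uniquely specified by its values on $g_4$ and $g_7$. I would therefore define the candidate map by
$$
g_4 \longmapsto \mu_4\;, \qquad g_7 \longmapsto \tfrac{1}{15}\mu_7\;,
$$
and extend multiplicatively. This is well defined as a graded-algebra map because $\mu_4 \in \mathrm{CE}(\R^{10,1\vert {\bf 32}})$ has the correct even degree $4$ and $\mu_7$ has the correct odd degree $7$, matching the bidegrees of $g_4$ and $g_7$.

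The only thing remaining is compatibility with the differential, and because the source is semifree it suffices to check this on the generators. On $g_4$ compatibility reads $d\mu_4 = 0$, and on $g_7$ it reads
$$
d\!\left( \tfrac{1}{15}\mu_7 \right) \;=\; \mu_4 \wedge \mu_4\;,
$$
i.e.\ $d\mu_7 = 15\,\mu_4 \wedge \mu_4$. Both identities are exactly the content of Proposition \ref{prop41}, which is why the normalization factor $\tfrac{1}{15}$ is inserted. There is no real obstacle here: the main (and only) computational input is Proposition \ref{prop41}, and the coefficient $\tfrac{1}{15}$ is chosen precisely so that the Sullivan relation $dg_7 = g_4 \wedge g_4$ is transported to the corresponding identity among the $\mu$'s. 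With that, the map is a DGCA morphism, equivalently an $L_\infty$-morphism $(\mu_4,\mu_7) : \R^{10,1\vert {\bf 32}} \to \mathfrak{l}(S^4)$, as claimed.
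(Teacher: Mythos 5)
Your proof is correct and is essentially the paper's own reading of this corollary: the paper states it as an immediate translation of Proposition \ref{prop41}, since a DGCA map out of the semifree algebra $\mathrm{CE}(\mathfrak{l}S^4)$ is fixed by its values on $g_4$ and $g_7$ and compatibility with the differential on these generators is exactly $d\mu_4=0$ and $d\mu_7 = 15\,\mu_4\wedge\mu_4$. The normalization $\tfrac{1}{15}$ is justified exactly as you say, so nothing is missing.
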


  \medskip
Next, we show that the morphism $(\mu_4,\mu_7)\colon \R^{10, 1\vert {\bf 32}}\to \mathfrak{l} S^4$ is actually induced by an equivariant 7-cocycle on the $\frak{m}2\frak{brane}$ extension of the super-Minkovski space $\R^{10, 1\vert {\bf 32}}$. To begin with, the fact that $\mu_4$ is a cocycle, i.e. $d\mu_4=0$, means that $\mu_4$ defines a super-DGCA
morphism
\begin{align*}
\R[g_4]&\to{\rm CE}(\R^{10, 1\vert {\bf 32}})\\
g_4&\mapsto \mu_4\;.
\end{align*}
Consequently, $\mu_4$ defines a morphism of super-$L_\infty$ algebras (which we will simply denote by
the same symbol $\mu_4$)
\[
  \mu_4 : \R^{10, 1\vert {\bf 32}} \to b^3\R\;.
\]
In other words $\mu_4$ is a 4-cocycle on the super-Minkowski space $\R^{10, 1\vert {\bf 32}}$.

\begin{defn}[{\cite[p. 12, p. 16]{Bouq}}]
Write $\frak{m}2\frak{brane}$ for the super $L_\infty$-algebra
which is the homotopy fiber of $\mu_4$, i.e. sitting in a homotopy pullback diagram of the form
$$
\xymatrix{
\frak{m}2\frak{brane} \ar[r]\ar[d] & 0\ar[d] \\
\R^{10,1\vert {\bf 32}} \ar[r]^{\mu_4} & b^3 \R\;.
}
$$
From this description we see that $\frak{m}2\frak{brane}$ is a principal $b^2 \R$-bundle
over $\R^{10,1\vert {\bf 32}}$. In the dual Chevalley-Eilenberg picture, ${\rm CE}(\frak{m}2\frak{brane})$
is obtained from ${\rm CE}(\R^{10,1\vert {\bf 32}})$ by adding a single generator in degree 3 which is
a primitive for $-\mu_4$ (\cite[Prop. 3.5]{Bouq}):
$$
{\rm CE}(\frak{m}2\frak{brane})=\big({\rm CE}(\R^{10,1\vert {\bf 32}}) \otimes \R[h_3]\;;
~ dh_3=-\mu_4\big)\;,
$$
with $\deg h_3=3$. The dual morphism is simply the obvious inclusion
${\rm CE}(\R^{10,1\vert {\bf 32}})\hookrightarrow {\rm CE}(\frak{m}2\frak{brane})$.
\label{def65}
\end{defn}
This definition allows expanding the relation $d\mu_7=15 \mu_4 \wedge \mu_4$
from Prop. \ref{prop41} as
follows.

\begin{cor}
\label{M5cocycleOnM2braneExtension}
There is a super $L_\infty$-cocycle of the form
$$
\xymatrix{
\frak{m}2\frak{brane} \ar[rrr]^-{h_3 \wedge \mu_4 + \tfrac{1}{15}\mu_7} &&& \mathfrak{l} (S^{7}) = b^6\R
}\;.
$$
\label{cor8}
\end{cor}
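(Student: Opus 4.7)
The plan is to translate the claim into its dual Chevalley–Eilenberg language and reduce it to a one-line closedness check. A super $L_\infty$-morphism $\mathfrak{m}2\mathfrak{brane}\to b^6\R$ is dually a super-DGCA morphism $\mathrm{CE}(b^6\R)=(\R[g_7],\,dg_7=0)\to \mathrm{CE}(\mathfrak{m}2\mathfrak{brane})$, which by freeness of $\mathrm{CE}(b^6\R)$ is nothing but the choice of a closed degree-$7$ element of $\mathrm{CE}(\mathfrak{m}2\mathfrak{brane})$ that $g_7$ maps to. So the entire statement reduces to verifying that $h_3\wedge \mu_4+\tfrac{1}{15}\mu_7$ is a closed element of $\mathrm{CE}(\mathfrak{m}2\mathfrak{brane})$.

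Next, I would recall the three defining relations that govern the differential on $\mathrm{CE}(\mathfrak{m}2\mathfrak{brane})$: the relation $dh_3=-\mu_4$ from Definition \ref{def65}, together with the two relations $d\mu_4=0$ and $d\mu_7=15\,\mu_4\wedge\mu_4$ from Proposition \ref{prop41} (which continue to hold in $\mathrm{CE}(\mathfrak{m}2\mathfrak{brane})$ since the inclusion $\mathrm{CE}(\R^{10,1|\mathbf{32}})\hookrightarrow\mathrm{CE}(\mathfrak{m}2\mathfrak{brane})$ is a DGCA morphism). Then I would apply the graded Leibniz rule, paying attention to the Koszul sign produced by $h_3$ being of odd cohomological degree:
\begin{align*}
d\bigl(h_3\wedge \mu_4+\tfrac{1}{15}\mu_7\bigr)
&=(dh_3)\wedge \mu_4 - h_3\wedge d\mu_4+\tfrac{1}{15}\,d\mu_7 \\
&=-\mu_4\wedge \mu_4 - 0 + \tfrac{1}{15}\cdot 15\,\mu_4\wedge\mu_4 \\
&=0.
\end{align*}
This is exactly the algebraic identity that makes the constant $\tfrac{1}{15}$ appearing in Corollary \ref{mu4-mu7IsS4Cocycle} reappear here as the coefficient of $\mu_7$.

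There is essentially no obstacle: the content of the corollary is a bookkeeping translation of Proposition \ref{prop41} into the language of the $\mathfrak{m}2\mathfrak{brane}$ extension, where the wedge-square $\mu_4\wedge\mu_4$ is exchanged for a primitive $-d(h_3\wedge\mu_4)$. The only thing to watch is the sign coming from $\deg h_3=3$ being odd; once that is handled correctly, the two contributions $-\mu_4\wedge\mu_4$ and $+\mu_4\wedge\mu_4$ cancel on the nose, giving a genuine degree-$7$ cocycle. Since $\mathfrak{l}(S^7)=b^6\R$ (cf.\ the odd-sphere case of the Sullivan model displayed in the introduction), this cocycle is precisely the asserted $L_\infty$-morphism with values in $\mathfrak{l}(S^7)$.
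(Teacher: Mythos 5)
Your proposal is correct and follows essentially the same route as the paper: dualize to a CE-algebra morphism, reduce the statement to closedness of $h_3\wedge\mu_4+\tfrac{1}{15}\mu_7$ in $\mathrm{CE}(\frak{m}2\frak{brane})$, and verify it by the Leibniz rule using $dh_3=-\mu_4$, $d\mu_4=0$, and $d\mu_7=15\,\mu_4\wedge\mu_4$. The sign bookkeeping and the cancellation are exactly as in the paper's own computation.
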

Explicitly, the above corollary precisely says that the element $h_3 \wedge \mu_4 + \tfrac{1}{15}\mu_7$ in
${\rm CE}(\frak{m}2\frak{brane})$ is closed, which is immediate:
\bea
d(h_3 \wedge \mu_4 + \tfrac{1}{15}\mu_7)&=&dh_3\wedge \mu_4-h_3\wedge d\mu_4+\tfrac{1}{15}d\mu_7
\nonumber\\
&=&-\mu_4\wedge\mu_4+\tfrac{1}{15}d\mu_7
\nonumber\\
&=&0\;.
\eea
By the defining characterization both $\frak{m}2\frak{brane}$ and $\mathfrak{l} S^{7}$ are naturally $b^2 \R$-principal bundles
according to \cite{NSS12}, so it is natural to
ask whether $h_3 \wedge \mu_4+ \tfrac{1}{15}\mu_7$ is $b^2 \R$-equivariant. If so
then by the general theory of higher bundles \cite{NSS12} it will descend to a morphism
\[
\R^{10,1\vert {\bf 32}}\to \mathfrak{l} (S^4)
\]
in the homotopy category of super $L_\infty$-algebras, i.e., to a span of $L_\infty$-morphisms of the form
\[
\R^{10,1\vert {\bf 32}}\xleftarrow{\sim}\R_{\rm res}^{10,1\vert {\bf 32}}\to \mathfrak{l} (S^4)\;,
\]
where $\R_{\rm res}^{10,1\vert {\bf 32}}\xrightarrow{\sim} \R^{10,1\vert {\bf 32}}$ is a quasi-isomorphism (we say that $\R_{\rm res}^{10,1\vert {\bf 32}}$ is a resolution of super Minkowski spacetime).
To exhibit in  components the equivariance of the 7-cocycle in Corollary \ref{cor8}
with respect to this action we need an explicit  resolution of super Minkowski spacetime:

\begin{defn}[{\cite[Section III]{FSS15}}]
\label{Def-CEgg}
Let $\R_{\rm res}^{10,1\vert {\bf 32}}$ be the super $L_\infty$-algebra whose
 Chevalley-Eilenberg algebra is obtained from $\mathrm{CE}(\R^{10,1\vert {\bf 32}})$ by adding two generators $h_3$ and $g_4$, of degree 3 and 4, respectively, with $dh_3=g_4 - \mu_4$ and $dg_4=0$:
$$
{\rm CE}(\R_{\rm res}^{10,1\vert {\bf 32}}):=
\Big({\rm CE}(\R^{10,1\vert {\bf 32}})\otimes\R[h_3, g_4]\;;
~
dh_3=g_4 - \mu_4\;, ~dg_4=0
  \Big)\;.
$$
\end{defn}

\begin{prop}
\label{resolution-one}
The canonical morphism $\R_{\rm res}^{10,1\vert {\bf 32}} \overset{\simeq}{\longrightarrow}
\R^{10,1\vert {\bf 32}}$, dual to the obvious inclusion  ${\rm CE}(\R^{10,1\vert {\bf 32}})\hookrightarrow {\rm CE}(\R_{\rm res}^{10,1\vert {\bf 32}})$, is
an equivalence of $L_\infty$-algebras. It factors the natural projection
$\frak{m}2\frak{brane} \to
\R^{10,1\vert {\bf 32}}$ through the morphism
$\frak{m}2\frak{brane} \to \R_{\rm res}^{10,1\vert {\bf 32}}$ whose dual map is
\begin{align*}
{\rm CE}(\R_{\rm res}^{10,1\vert {\bf 32}})&\longrightarrow {\rm CE}(\frak{m}2\frak{brane})\\
h_3 &\longmapsto h_3\\
g_4 &\longmapsto 0
\end{align*}
and the identity on all other generators.
\end{prop}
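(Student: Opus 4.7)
The plan is to verify the two assertions separately by direct inspection of the relevant Chevalley--Eilenberg algebras, together with a standard Sullivan-style acyclicity argument.

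For the equivalence $\R_{\rm res}^{10,1\vert {\bf 32}} \xrightarrow{\simeq} \R^{10,1\vert {\bf 32}}$, I would show that the dual inclusion $\mathrm{CE}(\R^{10,1\vert {\bf 32}}) \hookrightarrow \mathrm{CE}(\R_{\rm res}^{10,1\vert {\bf 32}})$ is a quasi-isomorphism. The trick is to introduce a change of generator $\tilde g_4 := g_4 - \mu_4$, which is admissible because $\mu_4$ has the same degree as $g_4$ and lies in the subalgebra generated by the remaining generators. Under this change the differential reads $d \tilde g_4 = -d\mu_4 = 0$ and $d h_3 = \tilde g_4$, so $(h_3, \tilde g_4)$ is a contractible Koszul--Tate pair and $\mathrm{CE}(\R_{\rm res}^{10,1\vert {\bf 32}})$ is manifestly the tensor product of $\mathrm{CE}(\R^{10,1\vert {\bf 32}})$ with the acyclic free CDGA $\bigl( \R[h_3, \tilde g_4],\ d h_3 = \tilde g_4 \bigr)$. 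K{\"u}nneth then immediately yields the quasi-isomorphism, hence the asserted $L_\infty$-equivalence.

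For the factorization, I would first check that the prescribed assignment $h_3 \mapsto h_3$, $g_4 \mapsto 0$, identity on all other generators, actually extends to a DGCA morphism $\mathrm{CE}(\R_{\rm res}^{10,1\vert {\bf 32}}) \to \mathrm{CE}(\frak{m}2\frak{brane})$. Since the generators inherited from $\mathrm{CE}(\R^{10,1\vert {\bf 32}})$ are preserved, and since their differentials only involve these same generators, compatibility on this part is automatic. On $g_4$ one has $d(0) = 0$, matching the image of $dg_4 = 0$. On $h_3$, the image of $dh_3 = g_4 - \mu_4$ is $0 - \mu_4 = -\mu_4$, which by Definition \ref{def65} is precisely $d h_3$ in $\mathrm{CE}(\frak{m}2\frak{brane})$.

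Finally, the factorization of the natural projection follows by composition: the morphism $\mathrm{CE}(\R^{10,1\vert {\bf 32}}) \hookrightarrow \mathrm{CE}(\R_{\rm res}^{10,1\vert {\bf 32}}) \to \mathrm{CE}(\frak{m}2\frak{brane})$ is the identity on every generator of $\mathrm{CE}(\R^{10,1\vert {\bf 32}})$, so it coincides with the canonical inclusion of Definition \ref{def65}, which dually is the projection $\frak{m}2\frak{brane} \to \R^{10,1\vert {\bf 32}}$. None of these steps poses a real obstacle; the only conceptual input is the recognition of $(h_3, g_4)$ as a contractible Sullivan extension after shifting $g_4$ by $\mu_4$, which is the standard way of resolving the cocycle $\mu_4$ by a trivial fibration.
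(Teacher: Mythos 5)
Your proof is correct, and it takes a slightly different route than the paper's on the one point the paper regards as nontrivial. The paper argues by passing to the quotient DGCA ${\rm CE}(\R_{\rm res}^{10,1\vert {\bf 32}})/{\rm CE}(\R^{10,1\vert {\bf 32}})\cong(\R[h_3,g_4];\,dh_3=g_4)$, observes that this is $\R[h_3,dh_3]\simeq\R$, and concludes the inclusion is an equivalence; this implicitly uses that for a relative Sullivan (cofibration) extension, acyclicity of the quotient is equivalent to the inclusion being a quasi-isomorphism. You instead perform the change of generators $\tilde g_4:=g_4-\mu_4$ (legitimate, since $\mu_4$ lies in the subalgebra on the old generators and is closed), exhibiting ${\rm CE}(\R_{\rm res}^{10,1\vert {\bf 32}})$ as the tensor product of ${\rm CE}(\R^{10,1\vert {\bf 32}})$ with the contractible free DGCA $(\R[h_3,\tilde g_4];\,dh_3=\tilde g_4)$, and invoke K\"unneth over $\R$. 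The two arguments rest on the same observation (the adjoined pair is a contractible Sullivan extension once $\mu_4$ is absorbed), but yours is more self-contained, since it never needs the quotient-detects-equivalence step; you also spell out the differential check for the map $h_3\mapsto h_3$, $g_4\mapsto 0$ into ${\rm CE}(\frak{m}2\frak{brane})$ and the identification of the composite with the canonical projection, which the paper dismisses as the trivial part. Both proofs are valid.
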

\begin{proof}
The only nontrivial part in the statement is the homotopy equivalence
$\R_{\rm res}^{10,1\vert {\bf 32}} \overset{\simeq}{\longrightarrow}
\R^{10,1\vert {\bf 32}}$. In terms of Chevalley-Eilenberg algebras, this amounts to saying that the quotient algebra
\[
{\rm CE}(\R_{\rm res}^{10,1\vert {\bf 32}})/{\rm CE}(\R^{10,1\vert {\bf 32}})
\]
is homotopy equivalent to $\R$ as a DGCA. One has
\bea
{\rm CE}(\R_{\rm res}^{10,1\vert {\bf 32}})/{\rm CE}(\R^{10,1\vert {\bf 32}})
&\cong &
\left(\R[h_3, g_4]\;; ~ dh_3=g_4\;, ~ dg_4=0\right)
\nonumber\\
&=&\R[h_3, dh_3]
\nonumber\\
&\cong & \R\;,
\eea
where the last quasi-isomorphism is the evident one.
\end{proof}
\begin{prop}\label{equivalent-cocycle}
The 4-cocycle $\tilde{\mu}_4\colon \R_{\rm res}^{10,1\vert {\bf 32}}\to b^3 \R$, dual to the obvious inclusion $\R[g_4]\to {\rm CE}(\R_{\rm res}^{10,1\vert {\bf 32}})$, fits into a homotopy commutative diagram of super $L_\infty$-algebras
\[
\xymatrix{ \R_{\rm res}^{10,1\vert {\bf 32}}\ar[r]^{\simeq}\ar[dr]_{\tilde{\mu}_4}&  \R^{10,1\vert {\bf 32}}\ar[d]^{\mu_4}\\
&b^3 \R \;.
}
\]
\end{prop}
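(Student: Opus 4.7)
The plan is to work dually in the DGCA picture. After dualization, the diagram becomes the assertion that the two DGCA morphisms $\R[g_4] \to {\rm CE}(\R_{\rm res}^{10,1\vert {\bf 32}})$ obtained as (i) the inclusion $g_4 \mapsto g_4$ (dual to $\tilde{\mu}_4$) and (ii) the composite of the dual of $\mu_4$ with the inclusion $\mathrm{CE}(\R^{10,1\vert {\bf 32}}) \hookrightarrow \mathrm{CE}(\R_{\rm res}^{10,1\vert {\bf 32}})$, sending $g_4 \mapsto \mu_4$, agree up to a DGCA homotopy.

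The source $\R[g_4] = \mathrm{CE}(b^3\R)$ is a free DGCA on a single closed generator in degree $4$, hence cofibrant, so it suffices to produce an explicit path object between the two morphisms. Taking as path object for $\mathrm{CE}(b^3\R)$ the DGCA
\[
\mathrm{CE}(b^3\R)^I := \big(\R[g_4,g_4',h_3]\,;\ dg_4 = dg_4' = 0,\ dh_3 = g_4 - g_4'\big),
\]
equipped with its two endpoint projections $g_4\mapsto g_4,\ g_4'\mapsto 0,\ h_3\mapsto 0$ and $g_4\mapsto 0,\ g_4'\mapsto g_4',\ h_3\mapsto 0$, I would define the homotopy
\[
H : \mathrm{CE}(b^3\R)^I \longrightarrow \mathrm{CE}(\R_{\rm res}^{10,1\vert {\bf 32}}), \qquad g_4\mapsto g_4,\ g_4'\mapsto \mu_4,\ h_3\mapsto h_3.
\]
Compatibility with the differential is precisely the defining relation $dh_3 = g_4 - \mu_4$ from Definition \ref{Def-CEgg} (together with $d\mu_4 = 0$, which is Proposition \ref{prop41}); post-composition with the two endpoint projections then recovers the two DGCA morphisms in question.

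Unwinding through the super $L_\infty$-algebra/DGCA duality, this exhibits the required homotopy between the two $L_\infty$-morphisms $\R_{\rm res}^{10,1\vert {\bf 32}} \to b^3\R$, namely $\tilde{\mu}_4$ and $\mu_4\circ \pi$, where $\pi$ is the quasi-isomorphism of Proposition \ref{resolution-one}. The main (and only) conceptual point is to recognize the generator $h_3$, introduced in the resolution precisely so that $dh_3 = g_4 - \mu_4$, as the primitive that witnesses $\tilde{\mu}_4$ and the pulled-back $\mu_4$ as cohomologous cocycles, hence as the very chain homotopy realizing the homotopy commutativity of the triangle. No further computation is required.
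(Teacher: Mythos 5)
Your argument is correct in substance and is, up to duality, the same as the paper's: the paper also reduces to the dual DGCA triangle and exhibits an explicit homotopy whose only input is the relation $d h_3 = g_4 - \mu_4$, except that it uses a right homotopy into the path object $\mathrm{CE}(\R_{\rm res}^{10,1\vert {\bf 32}})\otimes\Omega^\bullet(\Delta^1)$, given by the closed element $\left(g_4+t(\mu_4-g_4)\right)+h_3\,dt$, whereas you use the mirror-image left homotopy out of a cylinder on the source $\R[g_4]$ (equivalently, a path object for $b^3\R$ on the $L_\infty$-algebra side). Two small corrections, neither of which affects the core idea: (i) within DGCAs your $\R[g_4,g_4',h_3]$ is a \emph{cylinder} object for $\R[g_4]$, not a path object, and its structure maps in the relevant direction are the two endpoint \emph{inclusions} $\R[g_4]\to\R[g_4,g_4',h_3]$, $g_4\mapsto g_4$ and $g_4\mapsto g_4'$; the ``endpoint projections'' you wrote down ($g_4\mapsto g_4$, $g_4'\mapsto 0$, $h_3\mapsto 0$, and its twin) are not DGCA morphisms at all, since they would have to send $d h_3=g_4-g_4'$ to $0$ while sending $h_3$ to $0$. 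Correspondingly, the two morphisms in question are recovered by \emph{pre}composing your $H$ with the endpoint inclusions, not by postcomposing with projections. (ii) To conclude equality in the homotopy category you should also record that the collapse map $\R[g_4,g_4',h_3]\to\R[g_4]$, $g_4,g_4'\mapsto g_4$, $h_3\mapsto 0$, is a quasi-isomorphism — it is, since after the change of generators $u=g_4-g_4'$ the algebra splits as $\R[g_4]\otimes\big(\R[u,h_3];\,dh_3=u\big)$ with contractible second factor — together with the cofibrancy of the Sullivan algebra $\R[g_4]$, which you already note. With these adjustments your homotopy $H$ ($g_4\mapsto g_4$, $g_4'\mapsto \mu_4$, $h_3\mapsto h_3$) is exactly the witness that the paper's proof packages into its element of $\mathrm{CE}(\R_{\rm res}^{10,1\vert {\bf 32}})\otimes\Omega^\bullet(\Delta^1)$.
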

\begin{proof}
In the dual Chevalley-Eilenberg picture we have to show that the diagram
\[
\xymatrix{
\Big({{\rm CE}(\R^{10,1\vert {\bf 32}})\otimes\R[h_3, g_4]\;;\
\atop{
dh_3=g_4 - \mu_4\;, ~dg_4=0}}
  \Big)&&  \mathrm{CE}(\R^{10,1\vert {\bf 32}})\ar[ll]_-{\simeq}\\
&&(\R[g_4];\, dg_4=0)\ar[ull]^-{g_4\mapsto g_4}\ar[u]_{g_4\mapsto \mu_4}
}
\]
is homotopy commutative. For this is it sufficient that there exists a morphism
\[
\xi\colon \big(\R[g_4];\, dg_4=0\big)\to
\Big({{\rm CE}(\R^{10,1\vert {\bf 32}})\otimes\R[h_3, g_4]\;;\
\atop{
dh_3=g_4 - \mu_4, \;\; dg_4=0}}
  \Big)\otimes \Omega^\bullet(\Delta^1)\;,
\]
where $\Omega^\bullet(\Delta^1)=\R[t,dt]$ is the DGCA of polynomial differential forms on the 1-simplex
$\Delta^1$ (this does give a path space object for right homotopies according to \cite[Lemma 4.32]{pridham}), such that $\xi$ evaluated at $0$ maps $g_4$ to $g_4$ while  $\xi$ evaluated at $1$ maps $g_4$ to $\mu_4$.  In other words, we are looking for a closed degree 4 element $\xi_4(t)+\xi_3(t)dt$ in ${\rm CE}(\R^{10,1\vert {\bf 32}}_{\mathrm{res}})\otimes \Omega^\bullet(\Delta^1)$ with $\xi_4(0)=g_4$ and $\xi_4(1)=\mu_4$. An obvious choice is
\[
\left(g_4+t(\mu_4-g_4)\right)+h_3\,dt\;.
\]

\vspace{-4mm}
\end{proof}

Now we have all the ingredients to complete our diagram:
\begin{prop}{\cite[Section III]{FSS15}}
\label{prop43}
Starting with the cocycle $\mu_4$, there is commutative diagram of $L_\infty$-algebras of the form
$$
\xymatrix{
& \frak{m}2\frak{brane} \ar[dd]\ar[rrd] \ar[rrrr]^-{h_3 \wedge \mu_4+ \tfrac{1}{15}\mu_7}
&&&& \mathfrak{l} (S^{7}) \ar[dd]\ar[lld]
\\
& & &0\ar[dd]|(.38)\hole|(.41)\hole|(.43)\hole
 \\
\R^{10,1\vert {\bf 32}} & \R_{\rm res}^{10,1\vert {\bf 32}} \ar[l]_{\simeq} \ar[drr]_{\tilde{\mu}_4}
\ar[rrrr]|(.51)\hole|(.53)\hole|(.56)\hole^-{h_3 \wedge (g_4+\mu_4)+ \tfrac{1}{15}\mu_7}
&&&& \mathfrak{l} (S^{4}) \ar[dll] \\
&&& b^3 \R &&
}
$$
where the two front faces of the prism are homotopy pullbacks.
\end{prop}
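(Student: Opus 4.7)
My plan is to work throughout in the dual Chevalley--Eilenberg picture. The strategy is to verify that the two front squares of the prism are homotopy pullbacks exhibiting, respectively, $\mathfrak{m}2\mathfrak{brane}$ and $\mathfrak{l}(S^{7})$ as homotopy fibers over $b^{3}\R$, then to write down the back-bottom morphism $\R^{10,1\vert\mathbf{32}}_{\rm res}\to\mathfrak{l}(S^{4})$ explicitly and check it is a DGCA map lifting $\tilde{\mu}_{4}$. The top-back morphism will then be induced canonically by passage to fibers, and the quasi-isomorphism of Proposition \ref{resolution-one} will close the left face of the prism.

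For the front-right square, the dual inclusion $\mathrm{CE}(b^{3}\R)=(\R[g_4],dg_4=0)\hookrightarrow\mathrm{CE}(\mathfrak{l}(S^{4}))$ is a relative Sullivan algebra (free on the single generator $g_7$ with $dg_7=g_4\wedge g_4$), so by the criterion recalled in the introduction the dual map $\mathfrak{l}(S^{4})\to b^{3}\R$ is a fibration; its ordinary fiber, obtained by setting $g_4=0$ in $\mathrm{CE}(\mathfrak{l}(S^{4}))$, is $(\R[g_7],dg_7=0)=\mathrm{CE}(\mathfrak{l}(S^{7}))$ and so models the homotopy fiber. The front-left square is analogous: by Definition \ref{Def-CEgg} the inclusion $\mathrm{CE}(b^{3}\R)\hookrightarrow\mathrm{CE}(\R^{10,1\vert\mathbf{32}}_{\rm res})$ is likewise a relative Sullivan algebra, and the ordinary fiber obtained by setting $g_4=0$ precisely reproduces $\mathrm{CE}(\mathfrak{m}2\mathfrak{brane})$ from Definition \ref{def65}.

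I then define the back-bottom morphism dually by $g_4\mapsto g_4$ and $g_7\mapsto h_3\wedge(g_4+\mu_4)+\tfrac{1}{15}\mu_7$. The only nontrivial check is that this respects the differential in $\mathrm{CE}(\R^{10,1\vert\mathbf{32}}_{\rm res})$: using $dh_3=g_4-\mu_4$, $dg_4=d\mu_4=0$ and $d\mu_7=15\,\mu_4\wedge\mu_4$ from Proposition \ref{prop41}, a direct expansion gives $(g_4-\mu_4)\wedge(g_4+\mu_4)+\mu_4\wedge\mu_4=g_4\wedge g_4$, which matches the image of $dg_7=g_4\wedge g_4$. Composition with the fibration $\mathfrak{l}(S^{4})\to b^{3}\R$ then equals $\tilde{\mu}_{4}$ on the nose, so the back-bottom triangle commutes strictly. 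Restricting this morphism to fibers, that is, setting $g_4=0$ in the formula for the image of $g_7$, yields $h_3\wedge\mu_4+\tfrac{1}{15}\mu_7$, which is exactly the top-back morphism from Corollary \ref{cor8}; this completes the commuting prism.

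The main subtlety here is not computational but conceptual: the whole argument rests on interpreting relative Sullivan algebras dually as fibrations of super $L_\infty$-algebras, so that ordinary fibers indeed compute homotopy fibers and the two explicit front squares are genuine homotopy pullbacks. This is precisely the model-categorical ingredient set up in the introduction via the Pridham model structure and its adaptation to the super setting, which is why I expect no genuine obstacle, only bookkeeping.
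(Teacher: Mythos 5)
Your proof is correct and follows essentially the same route as the paper: working dually with Chevalley--Eilenberg algebras, the decisive step in both is the check that $g_7 \mapsto h_3 \wedge (g_4+\mu_4)+ \tfrac{1}{15}\mu_7$ respects the differential via $(g_4-\mu_4)\wedge(g_4+\mu_4)+\mu_4\wedge\mu_4 = g_4\wedge g_4$. The only difference is presentational: you spell out explicitly (via the relative-Sullivan/fibration argument and ordinary fibers) why the two front faces are homotopy pullbacks and obtain the top map by restriction to fibers, points the paper leaves implicit after Definitions \ref{def65}, \ref{Def-CEgg} and Corollary \ref{cor8}.
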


\begin{proof}
We have to check that the diagram exists and commutes at the level of the dual CE-algebras. Forgetting the differentials, this is immediate in terms of the defining generators:
each generator is mapped to the generator of the same name, if present, in the codomain, or to zero otherwise,
except for $g_7\in {\rm CE}(\mathfrak{l}S^{7})$ which is sent to $h_3 \wedge \mu_4 +\tfrac{1}{15}\mu_7$,
and $g_7 \in {\rm CE}(\mathfrak{l}(S^{4}))$, which is sent to $h_3 \wedge (g_4+\mu_4)+ \tfrac{1}{15}\mu_7$,
as indicated. When the differentials are taken into account, the only thing to be checked is that the middle
horizontal map respects the CE-differentials. Indeed, by Proposition \ref{prop41}, we have
\bea
d\big(h_3 \wedge (g_4+\mu_4)+ \tfrac{1}{15}\mu_7\big) &=& (g_4 -\mu_4) \wedge
(g_4 + \mu_4) + \mu_4 \wedge \mu_4
\nonumber\\
&=& g_4 \wedge g_4\;.
\eea
By Definition \ref{Def-CEgg}, this says that indeed
$g_7 \mapsto h_3 \wedge (g_4+\mu_4)+ \tfrac{1}{15}\mu_7$ respects the CE-differential.
\end{proof}

We conclude this section by showing that, as anticipated, the morphism of $L_\infty$-algebras exhibited in Proposition \ref{prop43} is equivalent to that in Corollary \ref{cor8}. That is, we have the following.
\begin{prop}
\label{EquivalenceOfTwoIncarnationsOfMBraneCocycles}
There is a homotopy commutative diagram of $L_\infty$-algebra morphisms
\[
\xymatrix{ \R_{\rm res}^{10,1\vert {\bf 32}}\ar[rrr]^{\simeq}
\ar[drrr]_{\hspace{-7mm}\small{h_3 \wedge (g_4+\mu_4)+ \tfrac{1}{15}\mu_7}}
&&&
 \R^{10,1\vert {\bf 32}}\ar[d]^{(\mu_4,\tfrac{1}{15}\mu_7)}\\
&&&\mathfrak{l} (S^{4})\;.
}
\]
\end{prop}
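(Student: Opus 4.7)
Dualizing to Chevalley--Eilenberg algebras, the two morphisms to compare are DGCA maps $\phi_0,\phi_1:{\rm CE}(\mathfrak{l}S^4)\to {\rm CE}(\R_{\rm res}^{10,1\vert {\bf 32}})$. The diagonal $\phi_0$ of Proposition \ref{prop43} sends
$$
g_4\longmapsto g_4,\qquad g_7\longmapsto h_3\wedge(g_4+\mu_4)+\tfrac{1}{15}\mu_7,
$$
whereas the composite $\phi_1$ through the quasi-isomorphism of Proposition \ref{resolution-one} and the cocycle $(\mu_4,\tfrac{1}{15}\mu_7)$ of Corollary \ref{mu4-mu7IsS4Cocycle} sends
$$
g_4\longmapsto \mu_4,\qquad g_7\longmapsto \tfrac{1}{15}\mu_7.
$$
Mirroring the proof of Proposition \ref{equivalent-cocycle}, I would exhibit a right-homotopy as a DGCA morphism
$$
\xi:{\rm CE}(\mathfrak{l}S^4)\longrightarrow {\rm CE}(\R_{\rm res}^{10,1\vert {\bf 32}})\otimes \R[t,dt]
$$
whose evaluations at $t=0$ and $t=1$ recover $\phi_0$ and $\phi_1$ respectively.

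On the generator $g_4$ I would simply recycle the interpolation
$$
\xi(g_4)\;:=\;\bigl((1-t)g_4+t\mu_4\bigr)+h_3\,dt
$$
already constructed in Proposition \ref{equivalent-cocycle}, whose total closedness is guaranteed by $dh_3=g_4-\mu_4$. The substantive step is to define $\xi(g_7)$ as a degree-$7$ element interpolating $h_3\wedge(g_4+\mu_4)+\tfrac{1}{15}\mu_7$ at $t=0$ with $\tfrac{1}{15}\mu_7$ at $t=1$, subject to the cocycle relation
$$
d\,\xi(g_7)\;=\;\xi(g_4)\wedge\xi(g_4)
$$
inherited from $dg_7=g_4\wedge g_4$ in ${\rm CE}(\mathfrak{l}S^4)$. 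I would make the purely horizontal ansatz
$$
\xi(g_7)\;=\;\alpha(t)\,h_3\wedge g_4+\beta(t)\,h_3\wedge\mu_4+\gamma(t)\,\mu_7,\qquad \alpha,\beta,\gamma\in\R[t],
$$
compute $d(h_3\wedge g_4)=g_4^{2}-\mu_4\wedge g_4$, $d(h_3\wedge\mu_4)=g_4\wedge\mu_4-\mu_4^{2}$, and invoke $d\mu_7=15\,\mu_4^{2}$ from Proposition \ref{prop41}. Matching the coefficients of the three monomials $g_4^{2}$, $g_4\wedge\mu_4$, $\mu_4^{2}$ in the CE-part of $\xi(g_4)\wedge\xi(g_4)=\bigl((1-t)g_4+t\mu_4\bigr)^{2}+2\bigl((1-t)g_4+t\mu_4\bigr)\wedge h_3\,dt$ pins down $\alpha(t)=(1-t)^{2}$, $\beta(t)=1-t^{2}$, $\gamma(t)=\tfrac{1}{15}$; the prescribed endpoint values then hold automatically, and the $dt$-component of $d\xi(g_7)$ turns out to equal the cross-term $2\xi(g_4)\wedge h_3\,dt$ on the nose.

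The step I expect to be the main obstacle is the \emph{simultaneous} fit of the two endpoint conditions and the cocycle relation, since a priori an auxiliary correction $B(t)\wedge dt$ in $\xi(g_7)$ might have been necessary. What makes the ansatz work without any such correction is that the primitive $h_3$ that witnesses the formal equivalence $g_4\simeq\mu_4$ in $\R_{\rm res}^{10,1\vert {\bf 32}}$ also, in combination with $\mu_7$, realizes the formal equivalence $g_4\wedge g_4\simeq\mu_4\wedge\mu_4$, precisely by Proposition \ref{prop41}. Once this observation is in hand, the construction of $\xi$ is a short polynomial verification.
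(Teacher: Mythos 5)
Your proof is correct and takes essentially the same route as the paper's: a right homotopy valued in ${\rm CE}(\R_{\rm res}^{10,1\vert {\bf 32}})\otimes\R[t,dt]$ with the same degree-4 interpolation, and your solved coefficients $\alpha=(1-t)^{2}$, $\beta=1-t^{2}$, $\gamma=\tfrac{1}{15}$ reproduce exactly the paper's explicit choice $\xi_7(t)=(1-t)h_3\wedge\bigl((1+t)\mu_4+(1-t)g_4\bigr)+\tfrac{1}{15}\mu_7$ with vanishing $dt$-component. The only difference is presentational: the paper writes the homotopy down and verifies it, whereas you derive it by matching coefficients.
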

\begin{proof}
We have to show that the dual diagram of DGCAs
\[
\xymatrix{
\Big({{\rm CE}(\R^{10,1\vert {\bf 32}})\otimes\R[h_3, g_4]\;;\
\atop{
dh_3=g_4 - \mu_4, ~dg_4=0}}
  \Big)
  &&
  \mathrm{CE}(\R^{10,1\vert {\bf 32}})\ar[ll]_-{\simeq}\\
  \\
&&(\R[g_4,g_7];\, dg_4=0,~ dg_7=g_4\wedge g_4)
\ar[uull]^{\hspace{-20mm} g_4\mapsto g_4, \atop{ g_7\mapsto h_3 \wedge (g_4+\mu_4)+
\tfrac{1}{15}\mu_7}}\ar[uu]_{g_4\mapsto \mu_4, \atop{ g_7\mapsto \tfrac{1}{15}\mu_7}}
}
\]
is homotopy commutative. Reasoning as in the proof of Proposition \ref{equivalent-cocycle}, we have to exhibit a closed degree 4 element $\xi_4(t)+\xi_3(t)dt$ and a degree 7 element $\xi_7(t)+\xi_6(t)dt$ in ${\rm CE}(\R^{10,1\vert {\bf 32}}_{\mathrm{res}})\otimes \Omega^\bullet(\Delta^1)$ such that $d(\xi_7(t)+\xi_6(t)dt)=(\xi_4(t)+\xi_3(t)dt)\wedge(\xi_4(t)+\xi_3(t)dt)$,
with $\xi_4(0)=g_4$, $\xi_4(1)=\mu_4$, $\xi_7(0)=h_3 \wedge (g_4+\mu_4)+ \tfrac{1}{15}\mu_7$ and $\xi_7(1)=\frac{1}{15}\mu_7$. An immediate  choice is
\begin{align*}
\xi_4(t)+\xi_3(t)dt&=\left(g_4+t(\mu_4-g_4)\right)+h_3\,dt\;,
\\
\xi_7(t)+\xi_6(t)dt&=(1-t)h_3 \wedge ((1+t)\mu_4+(1-t)g_4)+ \tfrac{1}{15}\mu_7\;.
\end{align*}
Indeed, with this choice the boundary conditions on the $\xi_i$'s are trivially satisfied and, moreover, we have
\begin{align*}
d(\xi_7(t)+\xi_6(t)dt)&=d\left((1-t)h_3 \wedge ((1+t)\mu_4+(1-t)g_4)+ \tfrac{1}{15}\mu_7\right)\\
&=(-dt\, h_3+(1-t)dh_3)\wedge  ((1+t)\mu_4+(1-t)g_4)\\
&\qquad -(1-t)h_3 \wedge (dt \, \mu_4 +(1+t)d\mu_4 -dt\, g_4 +(1-t)dg_4) +\tfrac{1}{15}d\mu_7\\
&=(h_3\,dt+(1-t)g_4 - (1-t)\mu_4))\wedge  ((1+t)\mu_4+(1-t)g_4)\\
&\qquad -(1-t)h_3 \wedge (dt \, \mu_4  -dt\, g_4) +\mu_4\wedge\mu_4\\
&=(1+t)\mu_4\wedge h_3\,dt+(1-t)g_4\wedge h_3\,dt+(1-t^2)g_4\wedge \mu_4+(1-t)^2g_4\wedge g_4\\
&\qquad -(1-t^2)\mu_4\wedge\mu_4-(1-t)^2\mu_4\wedge g_4-(1-t)\mu_4\wedge h_3\, dt+(1-t)g_4\wedge h_3\, dt\\
&\qquad +\mu_4\wedge\mu_4\\
&=2t\mu_4\wedge h_3\,dt+2(1-t)g_4\wedge h_3\,dt+2t(1-t)g_4\wedge \mu_4+(1-t)^2g_4\wedge g_4\\
&\qquad +t^2\mu_4\wedge\mu_4
\end{align*}
and
\begin{align*}
(\xi_4(t)+\xi_3(t)dt)\wedge(\xi_4(t)+\xi_3(t)dt)&=\left(\left(g_4+t(\mu_4-g_4)\right)+h_3\,dt\right)\wedge \left(\left(g_4+t(\mu_4-g_4)\right)+h_3\,dt\right)\\
&=g_4\wedge g_4+2t g_4\wedge (\mu_4-g_4)+ t^2 (\mu_4-g_4)\wedge (\mu_4-g_4)\\
&\qquad +2 g_4\wedge h_3\,dt+2t(\mu_4-g_4)\wedge h_3\, dt\\
&=(1-t)^2g_4\wedge g_4+2t (1-t) g_4\wedge \mu_4 +t^2 \mu_4\wedge \mu_4\\
&\qquad +2(1-t) g_4\wedge h_3\,dt+2t\mu_4\wedge h_3\, dt\;.
\end{align*}
\end{proof}

\begin{remark}
 {\bf (i)} The form of the equivariant cocycle in Proposition \ref{prop43} is that of the curvature of the WZW term of the sigma-model describing the M5-brane as considered in \cite{BLNPST}.

\item {\bf (ii)} As the notation suggests, in terms of rational homotopy theory, Proposition \ref{prop43} says that the CE-elements
$\mu_4$ and $\mu_7$ of Proposition \ref{prop41} define a cocycle with values in the rational 4-sphere. In the discussions
in \cite{SS} we see that under Lie integration and globalization in higher Cartan geometry, these elements encode the supergravity C-field and its magnetic dual.
\end{remark}

\section{The dimensional reduction to IIA superstring theory}
\label{Sec red}

We derive now the dimensional reduction of the rational charge structure of M-theory
to that of type IIA string theory, realized as cyclic cohomology in $L_\infty$-homotopy theory. It takes the
(rational) M2/M5-brane charge structure from the previous section to a (rational)
twisted cohomology theory for the F1/D$p$/NS5-branes in type IIA. In fact we find that
super $L_\infty$-theoretically there is an \emph{equivalence} between these two rational
cohomology theories, in 11 and in 10 dimensions.

The (M-theory) super-Minkowski spacetime $\R^{10,1\vert {\bf 32}}$ is rationally a $S^1_{\R}$-principal bundle
over the (type IIA) super-Minkowski spacetime $\mathbb{R}^{9,1\vert \mathbf{16} + \overline{\mathbf{16}}}$.
\begin{prop}[{\cite[prop. 4.5]{Bouq}}]
\label{R11AsExtensionOfR10}
There is a homotopy fiber sequence of $L_\infty$-algebras
\[
\xymatrix{
 \R^{10,1\vert {\bf 32}}
  \ar[rr]\ar[d]
 &&
 0
   \ar[d]
 \\
 \mathbb{R}^{9,1\vert \mathbf{16} + \overline{\mathbf{16}}}
  \ar[rr]_-{\overline{\psi} \Gamma_{10} \psi}
  && b\R\;,
}
\]
which exhibits $\mathbb{R}^{10,1\vert \mathbf{32}}$ as the central extension of super Lie algebras
classified by the 2-cocycle $\overline{\psi}\Gamma_{10}\psi$
(which is the D0-brane cocycle, see def. \ref{IIADpbranecocycles} below).
\end{prop}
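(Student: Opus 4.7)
The plan is to work dually with Chevalley--Eilenberg algebras and compute the homotopy fiber of the morphism $\overline{\psi}\Gamma_{10}\psi\colon \mathbb{R}^{9,1\vert \mathbf{16}+\overline{\mathbf{16}}} \to b\R$ by exhibiting an explicit fibrant replacement, following the same recipe used in Definition \ref{def65} for $\mu_4$. First one checks the cocycle condition: $\overline{\psi}\Gamma_{10}\psi$ is a closed even element of degree $2$ in $\mathrm{CE}(\mathbb{R}^{9,1\vert \mathbf{16}+\overline{\mathbf{16}}})$ because $d_{\mathrm{CE}}\psi = 0$ and the bilinear contains no bosonic generators, so no $d_{\mathrm{CE}}e^a$-contribution can arise. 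This defines the dual DGCA morphism $\mathrm{CE}(b\R) = (\R[b_2],\,db_2=0) \to \mathrm{CE}(\mathbb{R}^{9,1\vert \mathbf{16}+\overline{\mathbf{16}}})$ sending $b_2 \mapsto \overline{\psi}\Gamma_{10}\psi$, which is precisely dual to the displayed $L_\infty$-morphism.

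Next, compute the homotopy fiber by adjoining to $\mathrm{CE}(\mathbb{R}^{9,1\vert \mathbf{16}+\overline{\mathbf{16}}})$ a single new generator of degree $(1,\mathrm{even})$, call it $e^{10}$, with differential $de^{10} = \overline{\psi}\Gamma_{10}\psi$. This adjunction is a relative Sullivan algebra, hence dually a fibration of super $L_\infty$-algebras by the Pridham-style result invoked in the introduction; its strict fiber therefore coincides with the homotopy fiber we want, and is obtained explicitly as the DGCA generated by $\{e^a\}_{a=0,\dots,9}$, the new $e^{10}$, and the odd generators $\psi$, with differential inherited from the 10d super-Minkowski algebra together with the new relation $de^{10} = \overline{\psi}\Gamma_{10}\psi$ and $d\psi = 0$.

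The remaining step is to identify this resulting DGCA with $\mathrm{CE}(\R^{10,1\vert \mathbf{32}})$. Under the branching $\mathrm{Spin}(9,1) \hookrightarrow \mathrm{Spin}(10,1)$ the irreducible $32$-component $11$d Majorana representation restricts to the non-chiral type IIA representation $\mathbf{16}\oplus\overline{\mathbf{16}}$, matching the odd generators on both sides. Simultaneously, the 11d Clifford generators restrict to $\Gamma^a_{(11)} = \Gamma^a$ for $a = 0,\dots,9$, whereas $\Gamma^{10}_{(11)}$ acts as the 10d chirality operator, which in the conventions of Appendix B is $\Gamma_{10}$. Consequently the 11d super-torsion bilinear $\overline{\psi}\Gamma^{a}_{(11)}\psi$ reproduces the 10d bilinears $\overline{\psi}\Gamma^a\psi$ for $a \le 9$ and the cocycle $\overline{\psi}\Gamma_{10}\psi$ for $a=10$, so the CE-differentials match generator by generator.

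The main potential obstacle is purely a matter of spinor conventions: verifying that the conventions of Appendix B really do identify $\Gamma^{10}_{(11)}$ with the 10d chirality operator denoted $\Gamma_{10}$, with consistent signs for the charge-conjugation and Majorana pairings that enter the bilinears. Once this is settled --- it being the standard content of 11d/IIA reduction, as spelled out in \cite[Sec. 4]{Bouq} --- the homotopy fiber sequence is exhibited, and since the new generator $e^{10}$ lies in ordinary degree $1$, the extension is in fact a central extension of super Lie algebras in the strict sense, classified as a Lie algebra 2-cocycle by $\overline{\psi}\Gamma_{10}\psi$.
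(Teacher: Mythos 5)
The paper does not actually prove this proposition---it is imported verbatim from \cite[prop.~4.5]{Bouq}---so your reconstruction can only be measured against the cited source and against the analogous constructions the paper does spell out (Definition \ref{def65}, Proposition \ref{prop43}). By that measure your argument is the intended one and is correct in substance: check that $\overline{\psi}\Gamma_{10}\psi$ is closed (immediate since $d\psi=0$), identify the homotopy fiber of a $2$-cocycle with the corresponding central extension obtained by adjoining one even generator of degree $1$, and identify the extended CE-algebra with $\mathrm{CE}(\R^{10,1\vert\mathbf{32}})$ via the branching $\mathbf{32}\vert_{\mathrm{Spin}(9,1)}\cong\mathbf{16}\oplus\overline{\mathbf{16}}$ together with the fact that the eleventh Clifford generator acts as the $10$d chirality matrix $\Gamma_{10}$, so that $d e^{10}=\overline{\psi}\Gamma^{10}\psi$ restricts to the D0-cocycle.

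One sentence of your argument is wrong as literally stated, though easily repaired: after adjoining $e^{10}$ you say the adjunction is dually a fibration ``whose strict fiber therefore coincides with the homotopy fiber we want.'' The strict fiber of that fibration $\widehat{\mathfrak{g}}\to\R^{9,1\vert\mathbf{16}+\overline{\mathbf{16}}}$ is just the central line $\R$; what you need is that the \emph{total space} $\widehat{\mathfrak{g}}$ is the homotopy fiber of the cocycle. This follows either by citing \cite[Prop.~3.5]{Bouq}, exactly as the paper does in Definition \ref{def65} for $\mu_4$, or by resolving $0\to b\R$ by the path fibration with CE-algebra $(\R[c_1,b_2];\, d c_1=b_2,\, d b_2=0)$ and taking the strict pullback along the dual of $\overline{\psi}\Gamma_{10}\psi$: since one leg is a fibration, the strict pullback computes the homotopy fiber, and it is precisely $\mathrm{CE}(\R^{9,1\vert\mathbf{16}+\overline{\mathbf{16}}})\otimes\R[e^{10}]$ with $d e^{10}=\pm\overline{\psi}\Gamma_{10}\psi$. (The sign relative to the paper's convention $d h_3=-\mu_4$ in Definition \ref{def65} is immaterial, being absorbed by the automorphism $e^{10}\mapsto -e^{10}$, but you should fix it consistently so that the identification with $d e^{10}=\overline{\psi}\Gamma^{10}\psi$ in the $11$d algebra goes through on the nose.) With that correction, and granting the standard spinor facts you flag (same Dirac/Majorana conjugate in $10$d and $11$d, $\Gamma^{10}=\Gamma_{10}$ for the spacelike index), the proposition follows as you outline.
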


 Below in Prop. \ref{Hom-Prop} we show that the double dimensional reduction of the M-brane
 cocycles of Prop. \ref{EquivalenceOfTwoIncarnationsOfMBraneCocycles}
 along the fibration of super Minkowski spacetimes of Prop. \ref{R11AsExtensionOfR10}
 is neatly captured by the following Sullivan model for rational cyclic cohomology:
\begin{prop}
  \label{FreeLoopSpaceSullivanModel}
  Let $X$ be a simply connected topological space whose rationalization admits a minimal Sullivan model $(\wedge^\bullet V, d_X)$.
  Then
  \begin{enumerate}
    \item
      A Sullivan model for the rationalization of the free loop space $\mathcal{L}X$ of $X$ is given by
      $$
        \mathrm{CE}(\mathfrak{l}(\mathcal{L}X)) =  (\wedge^\bullet ( V \oplus s V ),
        d_{{}_{\mathcal{L}X}})
        \,,
      $$
      where $s V$ is $V$ with degrees shifted down by one, and with $d_{{}_{\mathcal{L}}X}$
      acting for $v \in V$ as
      $$
        d_{{}_{\mathcal{L}X}} \, v = d_{{}_X} \, v
        \,,
        \;\;\;\;\;\;
        d_{{}_{\mathcal{L}X}} \, s v = - s d_{{}_X} \, v
        \,,
      $$
      where on the right $s \colon V \to s V$ is extended uniquely as a graded derivation.
   \item
     A Sullivan model for the rationalization of the homotopy quotient $\mathcal{L}X /\!/ S^1$
     (presented by the Borel construction $\mathcal{L}X \times_{S^1} E S^1$) for the canonical
     circle group action on the free loop space (by rotation of loops) is given by
     $$
       \mathrm{CE}(\mathfrak{l}( \mathcal{L}X/\!/S^1 ))
       =
       \big( \wedge^\bullet( V \oplus s V \oplus \langle \omega_2\rangle), d_{{}_{\mathcal{L}X/\!/S^1}} \big)
     $$
     with
     $$
       d_{{}_{\mathcal{L}X/\!/S^1}} \, \omega_2 = 0
     $$
     and with $d_{{}_{\mathcal{L}X/\!/S^1}}$ acting on $w \in \wedge^\bullet V \oplus sV $ as
     $$
       d_{{}_{\mathcal{L}X/\!/S^1}}\, w = d_{{}_{\mathcal{L}X}}\, w + \omega_2 \wedge s w
       \,.
     $$
     Moreover, the canonical sequence of $L_\infty$-homomorphisms
     $$
       \xymatrix{
         \mathfrak{l}(\mathcal{L}X)
           \ar[r]
         & \mathfrak{l}(\mathcal{L}X/\!/S^1)
           \ar[r]
         & b \mathbb{R}
       }
     $$
     is a rational model for the homotopy fiber sequence
     $$
       \xymatrix{
         \mathcal{L}X
         \ar[r]
         &
         \mathcal{L}X/\!/S^1
         \ar[r]
         &
         B S^1
       }
     $$
     that exhibits the homotopy quotient.
  \end{enumerate}
\end{prop}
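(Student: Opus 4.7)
The plan is to prove (1) and (2) in sequence, with (2) following essentially formally from (1) together with the standard recipe for modeling Borel-construction fibrations.

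For (1), I would follow the classical Vigu\'e-Poirrier--Sullivan approach. Realize the free loop space as the homotopy pullback
$$
\xymatrix{
\mathcal{L}X \ar[r]\ar[d] & X^I \ar[d]^{(\mathrm{ev}_0,\,\mathrm{ev}_1)} \\
X \ar[r]^-{\Delta} & X \times X,
}
$$
reflecting that a loop is a path with coincident endpoints. A Sullivan model for $\mathcal{L}X$ then arises as the corresponding homotopy pushout of CDGAs: the multiplication $\mu\colon \wedge V \otimes \wedge V \to \wedge V$ dual to the diagonal is cofibrantly replaced by the relative Sullivan extension $(\wedge V)^{\otimes 2} \hookrightarrow (\wedge V)^{\otimes 2}\otimes\wedge(sV)$ with $D(sv) = v\otimes 1 - 1\otimes v$; pushing out along $\mu$ identifies the two factors, yielding the underlying algebra $\wedge(V\oplus sV)$ with the induced differential. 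A sign check then reproduces the formulas in the statement. In parallel, I would verify directly that $v \mapsto sv$ extends uniquely to a degree $-1$ derivation $s$ of $\wedge(V\oplus sV)$ with $s|_{sV}=0$, so that $s^2=0$ and $[d_{\mathcal{L}X},s]=0$ follow from checks on generators, immediately giving $d_{\mathcal{L}X}^2=0$.

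For (2), I would use that the Borel construction is the total space of a fibration $\mathcal{L}X \to \mathcal{L}X /\!/ S^1 \to BS^1$ whose Sullivan model is a relative Sullivan algebra over $\mathrm{CE}(\mathfrak{l}BS^1) = (\mathbb{R}[\omega_2],0)$. Since the underlying graded algebra of the total space model must equal $\mathbb{R}[\omega_2] \otimes \wedge(V \oplus sV)$ and must reduce mod $\omega_2$ to $d_{\mathcal{L}X}$, the most general possibility is
$$
D = d_{\mathcal{L}X} + \omega_2 \wedge \phi, \qquad D\omega_2=0,
$$
for some degree $-1$ derivation $\phi$ on $\wedge(V\oplus sV)$. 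Collecting powers of $\omega_2$ in $D^2=0$ forces $[d_{\mathcal{L}X},\phi]=0$ and $\phi^2=0$, both satisfied by $\phi = s$ from (1). That $s$ is the correct choice---rather than some other derivation compatible with these constraints---is the statement that $s$ realizes the infinitesimal $S^1$-action on $\mathcal{L}X$ by rotation of loops; this I would verify by comparison with an explicit simplicial model of $X^{S^1}$ or by naturality from the universal case $X = K(\mathbb{Z},n)$.

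The main obstacle is precisely this identification of $s$ with the infinitesimal circle action; every remaining check is a routine manipulation of graded derivations. The identification of the displayed sequence $\mathfrak{l}(\mathcal{L}X) \to \mathfrak{l}(\mathcal{L}X/\!/S^1) \to b\mathbb{R}$ with the Borel fiber sequence is then automatic: the quotient $\omega_2 \mapsto 0$ dualizes to the inclusion of the fiber $\mathcal{L}X$, the inclusion $(\mathbb{R}[\omega_2],0) \hookrightarrow \mathrm{CE}(\mathfrak{l}(\mathcal{L}X/\!/S^1))$ dualizes to the projection onto $BS^1$, and both dual maps are visibly relative Sullivan algebras, hence fibrations in the model structure recalled in the Introduction.
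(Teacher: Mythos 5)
First, note that the paper does not actually prove this proposition: it is quoted from the literature, with part (1) attributed to \cite{VigueSullivan} and part (2) to \cite{VigueBurghelea}. Your sketch of part (1) is essentially the Vigu\'e-Poirrier--Sullivan argument (model the homotopy pullback defining $\mathcal{L}X$ by a homotopy pushout of DGCAs along a relative Sullivan replacement of the multiplication $\mu$), and it is correct in outline, with one imprecision: the differential $D(sv)=v\otimes 1-1\otimes v$ does not square to zero unless $d_X v=0$; the relative Sullivan extension requires inductively determined correction terms lying in the ideal generated by $sV$, and it is exactly these corrections that, after pushing out along $\mu$, produce the formula $d_{\mathcal{L}X}(sv)=-s\,d_X v$. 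So the ``sign check'' you defer is where the actual content of (1) sits, but the route is the standard one.

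Part (2) is where there is a genuine gap, and you have correctly located it but not closed it. Two points. (a) The claim that ``the most general possibility is $D=d_{\mathcal{L}X}+\omega_2\wedge\phi$'' is not forced by degree counting: a relative Sullivan differential over $(\mathbb{R}[\omega_2],0)$ may a priori contain terms $\omega_2^k\wedge\phi_k$ for all $k\geq 1$, with $\phi_k$ of degree $1-2k$, and also the fiber part of such a model need only be \emph{some} Sullivan model of $\mathcal{L}X$, not the specific one from (1) on the nose. That a model of the special ``linear in $\omega_2$'' shape exists, with fiber part exactly $(\wedge(V\oplus sV),d_{\mathcal{L}X})$, is part of the theorem, not a formal consequence. (b) The identification of the fiberwise derivation with $s$ --- i.e.\ that $s$ is the rational shadow of the loop-rotation action, equivalently that no higher $\omega_2$-corrections are needed for a general minimal $d_X$ --- is precisely the hard content of \cite{VigueBurghelea}, proved there by relating the circle action to the degree $-1$ operator (Connes-type $B$-operator) and a filtration argument. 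Your proposed substitutes (comparison with a simplicial model of $X^{S^1}$, or ``naturality from the universal case $X=K(\mathbb{Z},n)$'') are not developed, and the second does not obviously suffice: for a general $X$ the minimal model is not a product of Eilenberg--MacLane pieces, and the compatibility of $s$ with the $k$-invariants encoded in $d_X$ (the identity $[d_{\mathcal{L}X},s]=0$ together with the absence of higher corrections) is exactly what must be proven. As it stands, your argument for (2) reduces the statement to the cited theorem rather than establishing it; the closing identification of the fiber sequence, by contrast, is fine, since both dual maps are visibly relative Sullivan algebras.
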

The first statement is due to \cite{VigueSullivan}, the second due to \cite{VigueBurghelea}.

Here we are concerned with the following special case of this fact:
\begin{example}
Let $X = S^4$ be the 4-sphere, with
$$
  \mathrm{CE}(\mathfrak{l}(S^4)) = \big(\wedge^\bullet \langle g_4, g_7\rangle,
 ~  d_{{}_{S^4}}\, g_4 = 0\,, ~d_{{}_{S^4}}\, g_7 = - \tfrac{1}{2}g_4 \wedge g_4\big)
  \,.
$$
(We have rescaled the generator $g_7$ by a factor of $-\tfrac{1}{2}$ with respect to the conventions in the previous section;
this yields an isomorphic model but serves to reduce prefactors in the following formulas.)
By Prop. \ref{FreeLoopSpaceSullivanModel} the free loop space of $S^4$ is modeled by
 \[
\mathrm{CE}(\mathfrak{l}({\mathcal{L}S^4}))=\Bigl( \mathbb{R}[\omega_4,\omega_6,h_3, h_7]\,;
\begin{array}{l}
d_{{}_{\mathcal{L}S^4}}\; \omega_4= 0, ~~d_{{}_{\mathcal{L}S^4}}\; \omega_6 = h_3\wedge \omega_4,
\\
d_{{}_{\mathcal{L}S^4}}\; h_3=0, ~~~d_{{}_{\mathcal{L}S^4}}\; h_7 = -\tfrac{1}{2}\omega_4\wedge \omega_4
\end{array}
\Bigr)\;.
\]
and the homotopy quotient by $S^1$ of the free loop space of $S^4$ is modeled as
\[
\mathrm{CE}(\mathfrak{l}({\mathcal{L}S^4}/\!/S^1))=\Bigl( \mathbb{R}[\omega_2,\omega_4,\omega_6,h_3, h_7]\,;
\begin{array}{l}
d_{{}_{\mathcal{L}S^4/S^1}}\; \omega_2=0, ~d_{{}_{\mathcal{L}S^4/S^1}}\; \omega_4=h_3\wedge \omega_2, ~d_{{}_{\mathcal{L}S^4/S^1}}\; \omega_6=h_3\wedge \omega_4,
\\
d_{{}_{\mathcal{L}S^4/S^1}}\; h_3=0, ~~~d_{{}_{\mathcal{L}S^4/S^1}}\; h_7= -\tfrac{1}{2}\omega_4\wedge \omega_4 + \omega_6\wedge \omega_2
\end{array}
\Bigr)\;.
\]
\end{example}
The following lemma is then immediate.
\begin{lemma}\label{s1-quotient}
We have a homotopy pushout diagram of DGCAs
\[
\xymatrix{
\bigl(\mathbb{R}[\omega_2],d\omega_2=0\bigr)\ar[r]\ar[d]&
\left({
    \R[ \omega_2,\omega_4,\omega_6,h_3,h_7] \;;\atop{\,dh_3=0,  ~d\omega_2=0, ~d\omega_4=h_3\wedge \omega_2, ~d\omega_6=h_3\wedge \omega_4, ~dh_7= -\tfrac{1}{2}\omega_4 \wedge \omega_4+\omega_6\wedge \omega_2}}
  \right)\ar[d]\\
\mathbb{R}\ar[r]& \left({
    \R[\omega_4,\omega_6,h_3, h_7] \;;\atop{\,dh_3=0, ~ d\omega_4=0, ~d\omega_6=h_3\wedge \omega_4,
    ~ dh_7= -\tfrac{1}{2}\omega_4\wedge \omega_4}}
  \right)
}
\]
inducing the homotopy fiber sequence
\[
\xymatrix{
  \mathfrak{l}(\mathcal{L}S^4)
    \ar[r] &
  \mathfrak{l}({\mathcal{L}S^4}/\!/S^1) \ar[r]^-{\omega_2} &
 b \R
}
\]
of $L_\infty$-algebras.
\end{lemma}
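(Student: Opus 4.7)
The plan is to recognize the top-left corner as $\mathrm{CE}(b\R)$, identify the left vertical as the canonical augmentation, and then compute the pushout explicitly on the right-hand column. First I would observe that $(\R[\omega_2],\, d\omega_2 = 0)$ is precisely $\mathrm{CE}(b\R)$ with $\omega_2$ sitting in degree $2$, so that the bottom horizontal arrow is the augmentation $\mathrm{CE}(b\R)\to \R$ sending $\omega_2$ to $0$, while the top horizontal arrow is the evident inclusion sending $\omega_2$ to $\omega_2$. The latter map is a relative Sullivan algebra, since $\mathrm{CE}(\mathfrak{l}(\mathcal{L}S^4/\!/S^1))$ is obtained from $\mathrm{CE}(b\R)$ by freely adjoining the generators $h_3, \omega_4, \omega_6, h_7$ in increasing order, with differentials expressible in terms of previously introduced generators. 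In particular, by the discussion in the introduction, this map is a cofibration in the standard model structure on DGCAs.

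Next I would compute the pushout. The pushout along the augmentation $\mathrm{CE}(b\R) \to \R$ of a DGCA under $\mathrm{CE}(b\R)$ is computed by quotienting by the ideal generated by $\omega_2$, i.e., by simply setting $\omega_2 = 0$. Substituting this into the differentials of $\mathrm{CE}(\mathfrak{l}(\mathcal{L}S^4/\!/S^1))$ yields
$$
d h_3 = 0, \quad d\omega_4 = h_3 \wedge 0 = 0, \quad d \omega_6 = h_3 \wedge \omega_4, \quad d h_7 = -\tfrac{1}{2}\omega_4 \wedge \omega_4 + \omega_6 \wedge 0 = -\tfrac{1}{2}\omega_4 \wedge \omega_4,
$$
which is exactly $\mathrm{CE}(\mathfrak{l}(\mathcal{L}S^4))$ as given in the preceding example. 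Because the left vertical map is a cofibration, this strict pushout of DGCAs is automatically a homotopy pushout, which establishes the first half of the lemma.

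Finally, I would invoke the $\mathrm{CE}$-duality between DGCAs and super $L_\infty$-algebras: a homotopy pushout of DGCAs dualizes to a homotopy pullback of super $L_\infty$-algebras. Since the bottom-right corner $\R$ is terminal in super $L_\infty$-algebras (its CE-algebra is the initial DGCA), the resulting homotopy pullback square exhibits $\mathfrak{l}(\mathcal{L}S^4)$ as the homotopy fiber of the morphism $\mathfrak{l}(\mathcal{L}S^4/\!/S^1) \to b\R$ dual to $\omega_2 \colon \mathrm{CE}(b\R) \hookrightarrow \mathrm{CE}(\mathfrak{l}(\mathcal{L}S^4/\!/S^1))$, yielding the stated homotopy fiber sequence. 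I do not anticipate any real obstacle here: the whole argument is essentially a direct verification on generators, with the one subtle point being the identification of the left vertical inclusion as a relative Sullivan algebra (hence cofibration), which is immediate from the triangular structure of the differentials in $\mathrm{CE}(\mathfrak{l}(\mathcal{L}S^4/\!/S^1))$.
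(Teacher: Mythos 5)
Your proof is correct and follows essentially the same route as the paper's (implicitly given) argument, which is spelled out verbatim for the analogous $\mathfrak{l}(\mathrm{ku})$-proposition later: the strict pushout is computed by setting $\omega_2 = 0$, it is a homotopy pushout because the inclusion $\R[\omega_2]\hookrightarrow \mathrm{CE}(\mathfrak{l}(\mathcal{L}S^4/\!/S^1))$ is a relative Sullivan algebra, and dualizing gives the homotopy fiber sequence. One cosmetic correction: relative to the printed diagram you have the arrows transposed — the cofibration you (correctly) exhibit is the top \emph{horizontal} inclusion, while the \emph{left vertical} arrow is the augmentation $\R[\omega_2]\to\R$, which is not a cofibration (and the corner dualizing to the point $0$ is the bottom-left $\R$), so the labels in your write-up should be adjusted even though the mathematical content is right.
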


\begin{remark}
The idea that double dimensional reduction in string theory is mathematically formalized by looping of cocycles
has been considered in \cite{AE, BV, MathaiSati, S, S0}. Here we will only be concerned with the dimensional reduction of a cocycle
 of the form $\mathbb{R}^{10,1\vert \mathbf{32}} \to \mathfrak{l}(S^4)$ along the projection  $\mathbb{R}^{10,1\vert \mathbf{32}} \to \mathbb{R}^{9,1\vert \mathbf{16} + \overline{\mathbf{16}}}$, but this is just a particular instance of a general procedure \cite{TFolds},
 as illustrated in \cite{S1}.
\end{remark}

To dimensionally reduce our morphism  $\mathbb{R}^{10,1\vert \mathbf{32}} \to \mathfrak{l} S^4$ to a morphism
$\mathbb{R}^{9,1\vert \mathbf{16} + \overline{\mathbf{16}}}\to
\mathfrak{l}({\mathcal{L}S^4}/\!/S^1)$ we systematically ``isolate'' the vertical coordinate $e^{11}$ in $\mathbb{R}^{10,1\vert \mathbf{32}}$. Before doing so, it will be useful to recall the definition of the $F1$-brane cocycle on  $\mathbb{R}^{9,1\vert \mathbf{16} + \overline{\mathbf{16}}}$ as it will show up in the dimensional reduction of the cocycle $(\mu_4,\mu_7)\colon \mathbb{R}^{10,1\vert \mathbf{32}} \to \mathfrak{l}(S^4)$.

\begin{defn}[{\cite{CAIB99}, \cite[Def. 4.2]{Bouq}}  ]
\label{def.mu-f1}
The type IIA superstring (or \emph{F1}-brane) cocycle is the super Lie algebra 3-cocycle
\[
\mu_{{}_{F1}}:= i \overline{\psi} \Gamma^a \Gamma_{10}  \psi \wedge e_{a}~\colon
~~\mathbb{R}^{9,1\vert \mathbf{16} + \overline{\mathbf{16}}}\longrightarrow b^2\mathbb{R}
 \,.
\]
\end{defn}

\begin{remark}As the name indicates, the 3-cocycle $\mu_{{}_{F1}}$ plays a relevant role in type IIA superstring theory. We are going to discuss this in the next section. In particular, the cocycle $\mu_{{}_{F1}}$ is the
one which gives rise to the WZW term of the Green-Schwarz sigma model for
the type IIA string (see \cite{Bouq} and references therein).
\end{remark}

The main statement of this section is now the following:
\begin{prop}
\label{Hom-Prop}
There is a canonical \emph{dimensional reduction} isomorphism of hom-sets
\[
  \mathrm{Hom}_{L_\infty}(\mathbb{R}^{10,1\vert \mathbf{32}},\mathfrak{l} S^4)
    \xrightarrow{\cong}
  \mathrm{Hom}_{L_\infty/b \R}
  \big(\mathbb{R}^{9,1\vert \mathbf{16} + \overline{\mathbf{16}}},\mathfrak{l}({\mathcal{L}S^4}/\!/S^1)\big)
  \,,
\]
where on the right we have $L_\infty$-morphisms over $b \R$, via Proposition \ref{R11AsExtensionOfR10}
and Lemma \ref{s1-quotient}.
Moreover, the image under this isomorphism of any $L_\infty$-morphism
\[
  \mathbb{R}^{10,1\vert \mathbf{32}} \xrightarrow{(g_4,~g_7)} \mathfrak{l} S^4
\]
is the $L_\infty$-morphism over $b\R$
\[
  \xymatrix{
    \mathbb{R}^{9,1\vert \mathbf{16} + \overline{\mathbf{16}}}
     \ar[drr]_-{\overline{\psi}\Gamma_{10}\psi}
     \ar[rrrr]^{(\omega_2,~ \omega_4, ~\omega_6, ~h_3, ~h_7)}
     &&&&
    \mathfrak{l}({\mathcal{L}S^4}/\!/S^1)\;,
    \ar[dll]^-{\omega_2}
    \\
    && b \R
  }
\]
where $\omega_2$ in the 5-tuple $(\omega_2, \omega_4, \omega_6, h_3, h_7)$ is the the D0-brane cocycle  $\overline{\psi}\Gamma_{10}\psi\in \mathrm{CE}(\mathbb{R}^{9,1\vert \mathbf{16} + \overline{\mathbf{16}}})$
from Proposition \ref{R11AsExtensionOfR10} and where $h_3,h_7, \omega_4, \omega_6 \in \mathrm{CE}(\mathbb{R}^{9,1\vert \mathbf{16} + \overline{\mathbf{16}}})$ are uniquely defined by the relations
$$
  \omega_4 = g_4 - h_3 \wedge e^{11}
  \;\;\;
  \mbox{and}
  \;\;\;
  h_7 = g_7 + \omega_6 \wedge e^{11}
  \,.
$$
In the case that $g_4 = \mu_4$ and $g_7 = \mu_7$ as we had  in Corollary \ref{mu4-mu7IsS4Cocycle}, the element
 $ h_3$
is the F1-brane cocycle $\mu_{{}_{F1}} := \overline{\psi} \Gamma^a \Gamma_{10}\psi \wedge e_a$ from Definition \ref{def.mu-f1}.
\end{prop}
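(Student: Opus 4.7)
The plan is to translate the statement to the dual Chevalley--Eilenberg picture, where an $L_\infty$-morphism $\mathbb{R}^{10,1\vert \mathbf{32}}\to \mathfrak{l}S^4$ is a DGCA morphism $\mathrm{CE}(\mathfrak{l}S^4)\to \mathrm{CE}(\mathbb{R}^{10,1\vert \mathbf{32}})$, uniquely specified by the images $G_4,G_7$ of the generators $g_4,g_7$, subject to the closure conditions $dG_4=0$ and $dG_7=-\tfrac{1}{2}G_4\wedge G_4$. Similarly, an $L_\infty$-morphism $\mathbb{R}^{9,1\vert \mathbf{16}+\overline{\mathbf{16}}}\to \mathfrak{l}(\mathcal{L}S^4/\!/S^1)$ over $b\mathbb{R}$ is the same as a DGCA morphism from $\mathrm{CE}(\mathfrak{l}(\mathcal{L}S^4/\!/S^1))$ sending the five generators $\omega_2,\omega_4,\omega_6,h_3,h_7$ to closed/suitable elements, with the compatibility that $\omega_2$ is sent to the D0-brane cocycle $\overline{\psi}\Gamma_{10}\psi$.

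First I would invoke Proposition \ref{R11AsExtensionOfR10} to identify $\mathrm{CE}(\mathbb{R}^{10,1\vert \mathbf{32}})\cong \mathrm{CE}(\mathbb{R}^{9,1\vert \mathbf{16}+\overline{\mathbf{16}}})\otimes \wedge[e^{11}]$ with $de^{11}=\omega_2:=\overline{\psi}\Gamma_{10}\psi$. Every element then decomposes uniquely as $\alpha+\beta\wedge e^{11}$ with $\alpha,\beta\in \mathrm{CE}(\mathbb{R}^{9,1\vert \mathbf{16}+\overline{\mathbf{16}}})$. Writing $G_4=\omega_4+h_3\wedge e^{11}$ and $G_7=h_7-\omega_6\wedge e^{11}$ defines four auxiliary elements $\omega_4,\omega_6,h_3,h_7$ on $\mathbb{R}^{9,1\vert \mathbf{16}+\overline{\mathbf{16}}}$; these are the formulas displayed in the statement.

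Next I would plug these decompositions into the closure conditions and separate out the coefficients of $1$ and of $e^{11}$. Expanding $dG_4=0$ yields, on the nose, $dh_3=0$ together with $d\omega_4=h_3\wedge \omega_2$. Expanding $dG_7=-\tfrac{1}{2}G_4\wedge G_4$ and using $e^{11}\wedge e^{11}=0$ and $h_3\wedge h_3=0$ (both odd), the cross term produces $-\omega_4\wedge h_3\wedge e^{11}$, while $d(\omega_6\wedge e^{11})$ contributes $-d\omega_6\wedge e^{11}-\omega_6\wedge \omega_2$. Matching coefficients yields $d\omega_6=h_3\wedge \omega_4$ and $dh_7=-\tfrac{1}{2}\omega_4\wedge \omega_4+\omega_6\wedge \omega_2$, which are exactly the CE-relations of $\mathfrak{l}(\mathcal{L}S^4/\!/S^1)$. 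This shows the map from left to right in the desired bijection is well defined, and the inverse is immediate: given $(\omega_2,\omega_4,\omega_6,h_3,h_7)$ on the right, the formulas $G_4=\omega_4+h_3\wedge e^{11}$, $G_7=h_7-\omega_6\wedge e^{11}$ reconstruct a closed $S^4$-cocycle on $\mathbb{R}^{10,1\vert \mathbf{32}}$.

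For the final F1-brane identification I would specialize to $G_4=\mu_4\propto \overline{\psi}\,\Gamma^{ab}\psi\wedge e_a\wedge e_b$, splitting the range of indices to isolate those terms containing a factor of $e^{11}$. Using $\Gamma^{a\,11}=\Gamma^a\Gamma_{10}$ in the IIA conventions, the coefficient of $e^{11}$ in $\mu_4$ is, up to the normalization fixed by Definition \ref{def.mu-f1}, precisely $\overline{\psi}\,\Gamma^a\Gamma_{10}\psi\wedge e_a=\mu_{{}_{F1}}$, identifying $h_3$ with the type IIA F1-brane cocycle. The main obstacle I anticipate is the careful bookkeeping of graded-commutativity signs in expanding $G_4\wedge G_4$ and $d(\omega_6\wedge e^{11})$: the signs are what force the specific ansatz $G_7=h_7-\omega_6\wedge e^{11}$ (rather than with a plus) in order for the $+\omega_6\wedge \omega_2$ term to appear correctly in the CE-relation for $dh_7$.
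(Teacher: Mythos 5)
Your proposal is correct and follows essentially the same route as the paper's proof: identify $\mathrm{CE}(\mathbb{R}^{10,1\vert\mathbf{32}})$ as $\mathrm{CE}(\mathbb{R}^{9,1\vert\mathbf{16}+\overline{\mathbf{16}}})$ with the odd-degree generator $e^{11}$ adjoined ($d e^{11}=\overline{\psi}\Gamma_{10}\psi=\omega_2$), use freeness to get the unique decomposition $g_4=\omega_4+h_3\wedge e^{11}$, $g_7=h_7-\omega_6\wedge e^{11}$, and match the coefficients of $1$ and $e^{11}$ in $d g_4=0$, $d g_7=-\tfrac12 g_4\wedge g_4$ against the CE-relations of $\mathfrak{l}(\mathcal{L}S^4/\!/S^1)$. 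Your additional remarks (the explicit inverse and the Clifford-algebra identification of $h_3$ with $\mu_{F1}$ when $g_4=\mu_4$) are consistent with what the paper asserts.
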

\begin{proof}
By the fact that the underlying graded algebras are free, and since $e^{11}$ is a generator of odd degree,
the given decomposition for $\omega_4$ and $h_7$ is unique.
Hence it is sufficient to observe that under this decomposition the defining equations
$$
  d g_4 = 0
  \,,\;\;\;
  d g_{7} = -\tfrac{1}{2} g_4 \wedge g_4
$$
for the $\mathfrak{l}S^4$-valued cocycle on $\R^{10,1\vert \mathbf{32}}$
turn into the equations for an $\mathfrak{l} ( \mathcal{L}S^4 /\!/S^1)$-valued cocycle on
$\R^{9,1\vert \mathbf{16} + \overline{\mathbf{16}}}$. This is straightforward (using
$d_{{}_{\R^{10,1\vert \mathbf{32}}}} \, e^{11} = \overline{\psi}\Gamma_{10}\psi = \omega_2$):
$$
  \begin{aligned}
    & d_{{}_{\R^{10,1\vert \mathbf{32}}}}\, ( \omega_4 + h_3 \wedge e^{11} ) = 0
    \\
    \Leftrightarrow \;\;\;\;
    &
    d_{{}_{\R^{9,1\vert \mathbf{16}+ \overline{\mathbf{16}}}}}\, (\omega_4)  - h_3 \wedge
    d_{{}_{\R^{9,1\vert \mathbf{16}+ \overline{\mathbf{16}}}}}\,e^{11} = 0
    \;\;\;\; \mbox{and} \;\;\;\;
    d_{{}_{\R^{9,1\vert \mathbf{16}+ \overline{\mathbf{16}}}}}\, h_3 = 0
    \\
    \Leftrightarrow \;\;\;\;
    &
    d_{{}_{\R^{9,1\vert \mathbf{16}+ \overline{\mathbf{16}}}}}\, \omega_4 = h_3 \wedge \omega_2
    \qquad \qquad \qquad \qquad \; \mbox{and} \;\;\;\;
    d_{{}_{\R^{9,1\vert \mathbf{16}+ \overline{\mathbf{16}}}}}\, h_3 = 0\;,
  \end{aligned}
$$

as well as

$$
  \begin{aligned}
    & d_{{}_{\R^{10,1\vert \mathbf{32}}}}\, ( h_7 - \omega_6 \wedge e^{11} )
      = -\tfrac{1}{2}( \omega_4 + h_3 \wedge e^{11} ) \wedge (\omega_4 + h_3\wedge e^{11})
   \\
    \Leftrightarrow \;\;\;\;
    &
    d_{{}_{\R^{9,1\vert \mathbf{16}+ \overline{\mathbf{16}}}}}\, h_7 - \omega_6 \wedge \omega_2
    =
    -\tfrac{1}{2}\omega_4 \wedge \omega_4
    \;\;\;\; \mbox{and} \;\;\;\;
    -  d_{{}_{\R^{9,1\vert \mathbf{16}+ \overline{\mathbf{16}}}}}\, \omega_6 = - h_3 \wedge \omega_4
    \\
    \Leftrightarrow \;\;\;\;
    &
    d_{{}_{\R^{9,1\vert \mathbf{16}+ \overline{\mathbf{16}}}}}\, h_7 = -\tfrac{1}{2}\omega_4 \wedge \omega_4  + \omega_6 \wedge \omega_2
    \;\;\;\; \mbox{and} \qquad \;
    d_{{}_{\R^{9,1\vert \mathbf{16}+ \overline{\mathbf{16}}}}}\, \omega_6 = h_3 \wedge \omega_4\;.
  \end{aligned}
$$
\end{proof}

\begin{remark}
[Self-duality]\label{SelfDuality}

The fields of (super)gravity and the 11d supergravity equations of motion are implied by
a torsion-free globalization over an 11-dimensional super-spacetime $X$ of the supercocycles for the M2-brane, see \cite{StructureTheory}.
Moreover, globalizing the combined M2/M5-brane cocycles with coefficients in $\mathfrak{l}S^4$ yields the
BPS-charge extensions of the superisometry superalgebras of supergravity solutions, see \cite{SS}.
If we write $G_4$ and $G_7$ for the corresponding super-differential forms on spacetime, i.e. for the components
$$
  (G_4, G_7)
    \;:\;
  \mathrm{CE}(\mathfrak{l}S^4)
    \longrightarrow
  \Omega^\bullet(X)
$$
then these supergravity equation of motion imply in particular that $G_7$ is the Hodge dual of $G_4$
$$
  G_7 = \star_{{}_{11}} G_4
  \,.
$$
In terms of these global forms the transmutation of Proposition \ref{Hom-Prop}
corresponds to the Gysin-sequence decomposition of \cite[section 4.2]{MathaiSati}
 $$
  G_4 =  R_4 + H_3 \wedge e^{11}\;, \qquad
  G_7 = H_7 - R_6 \wedge e^{11}
$$
with
$$
 d R_4 + H_3 \wedge R_2 = 0\;, \qquad
  d R_6 = 2 H_3 \wedge R_4\;, \qquad
  d H_7 = -\tfrac{1}{2}R_4 \wedge R_4 + R_6 \wedge R_2 \;.
$$
The above are local versions of the derivations of the field strengths and their Bianchi identities
for the D$p$-branes (for $p \in \{0,2,4\}$) of type IIA string theory;
that last equation is the Chern-Simons term of the NS5-brane
(see \cite{FS} \cite{NS5} and references therein).
The 11d sugra equation of motion $G_7 = \star_{{}_{11}} G_4$ then implies
$$
 R_4 = \star_{{}_{10}} R_6\;, \qquad
 H_3 = \star_{{}_{10}} H_7\;,
$$
as it should be.
\end{remark}

\section{The supercocycles in type IIA superstring theory}
\label{Sec IIA}

In the previous section we have obtained the unified charge structure of F1/D$p$/NS5-branes
by dimensional reduction of the unified charge structure of M2/M5-branes in degree 4-cohomotopy,
as a statement in super $L_\infty$-cohomology theory. But traditionally the super-cocycles
for the WZW terms of the D$p$-branes are considered separately (for each $p$) as cocycles
on extended super Minkowski spacetime (see \cite{Bouq} for literature and $L_\infty$-theoretic formulation).
Here we show that the same $L_\infty$-descent mechanism which unifies the M2-brane charges with the
M5-brane charges (in Sec. \ref{Sec M}) also unifies these separate D$p$-brane charges to a single
charge in (the rational image of) twisted K-theory.

\medskip
Recall from Def. \ref{def.mu-f1} that the superstring (or \emph{F1}-brane) cocycle in type IIA
is the super Lie algebra 3-cocycle
$$
\mu_{{}_{F1}}
  :=
   i \overline{\psi} \Gamma^a \Gamma_{10} \psi \wedge e_{a}~\colon~
\mathbb{R}^{9,1\vert \mathbf{16} + \overline{\mathbf{16}}}\longrightarrow b^2\R\;.
$$
We will now put this cocycle to work.

\begin{defn}[{\cite[Def. 4.2]{Bouq}}]
The super Lie 2-algebra $\mathfrak{string}_{\mathrm{IIA}}$ is the the super Lie 2-algebra extension of $ \mathbb{R}^{9,1\vert \mathbf{16}+\overline{\mathbf{16}}}$ classified by the 3-cocycle $\mu_{{}_{F1}}$. Equivalently, it is the homotopy fiber
(in super $L_\infty$-algebras) of the 3-cocycle $\mu_{{}_{F1}}$:
\[
\xymatrix{
\mathfrak{string}_{\mathrm{IIA}}\ar[rr]\ar[d] &&0\ar[d]\\
\mathbb{R}^{9,1\vert \mathbf{16}+\overline{\mathbf{16}}}\ar[rr]^-{\mu_{{}_{F1}}}
&& b^2 \R\;.
}
\]
\end{defn}

\begin{remark}
By \cite[Prop. 3.5]{Bouq} the
Chevalley-Eilenberg algebra of $\mathfrak{string}_{\mathrm{IIA}}$ is obtained by adjoining  to ${\rm CE}(\mathbb{R}^{9,1\vert \mathbf{16} + \overline{\mathbf{16}}})$ an element $f_2$ of degree $(2,\mbox{even})$ whose CE-differential is the 3-cocycle
$-\mu_{{}_{F1}}$:

\begin{displaymath}
{\rm CE} \big(
    \mathfrak{string}_{\mathrm{IIA}}
  \big)
  =
  \big(
    {\rm CE}(\mathbb{R}^{9,1\vert \mathbf{16}+\overline{\mathbf{16}}})
    \otimes
   \mathbb{R}[ f_2 ]
    \,;~
    d f_2 = -\mu_{{}_{F1}}
  \big)
  \,.
\end{displaymath}
\end{remark}

\medskip
Here the generator $f_2$ will play the role of the field strength of the Chan-Paton gauge field on the
D-branes. We need to add the corresponding fields strengths of the RR-forms. To see which form they
should have, we look at the dimensional reduction of the M-brane charges. By direct computation one finds
\begin{prop}
  \label{reductionOfMBraneCocycles}
  The image of the the M-brane cocycle $(\mu_4,\mu_7): \mathbb{R}^{10,1\vert \mathbf{32}} \longrightarrow \mathfrak{l}(S^4)$
  from Cor. \ref{mu4-mu7IsS4Cocycle} under the double dimensional reduction isomorphism of Prop. \ref{Hom-Prop}
  have the following components:
  $$
    \begin{aligned}
      \mu_{{}_{F1}} & = i \left(\overline{\psi}\wedge \Gamma_{a}\Gamma_{10}\psi\right) \wedge e^a
      \\
      \mu_{{}_{D0}} & = \overline{\psi} \wedge \Gamma_{10} \psi
      \\
      \mu_{{}_{D2}} & = \tfrac{i}{2} \left(\overline{\psi} \wedge \Gamma_{a_1 a_2}\psi\right)
      \wedge e^{a_1} \wedge e^{a_2}
      \\
      \mu_{{}_{D4}} &= \tfrac{1}{4!}\left(\overline{\psi} \wedge \Gamma_{a_1 \cdots a_4}\psi\right) \wedge e^{a_1}\wedge \cdots e^{a_4}\;,
    \end{aligned}
  $$
  where the subscripts reflect their interpretation according to Def. \ref{SelfDuality}.
\end{prop}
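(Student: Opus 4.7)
The plan is to apply the dimensional reduction isomorphism of Proposition \ref{Hom-Prop} directly to the supercocycle $(g_4, g_7) = (\mu_4, \mu_7)$ of Corollary \ref{mu4-mu7IsS4Cocycle}. According to that proposition, the components of the reduced cocycle $\mathbb{R}^{9,1\vert \mathbf{16}+\overline{\mathbf{16}}} \to \mathfrak{l}(\mathcal{L}S^4/\!/S^1)$ are read off from the unique decompositions
\[
 \mu_4 = \omega_4 + h_3 \wedge e^{11}, \qquad \mu_7 = h_7 - \omega_6 \wedge e^{11},
\]
where $\omega_4, h_3, h_7, \omega_6$ do not contain the 11-direction vielbein $e^{11}$. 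The component $\omega_2$ is already identified in Proposition \ref{R11AsExtensionOfR10} as $\overline{\psi}\,\Gamma_{10}\psi = \mu_{D0}$, and the sought-for formulas for $\omega_4, h_3, \omega_6$ are then the content of the proposition (the remaining component $h_7$ corresponds to the NS5-brane cocycle, which the proposition does not list).

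To extract $\omega_4$ and $h_3$, I would expand $\mu_4 = \tfrac{i}{2}(\overline{\psi}\,\Gamma_{a_1 a_2}\psi) \wedge e^{a_1}\wedge e^{a_2}$ with 11-dimensional indices $a_i \in \{0,\ldots,10\}$ and split the sum into the piece where both indices are $\leq 9$ and the piece where exactly one index equals $10$. The first piece is manifestly the claimed $\mu_{D2}$, identifying $\omega_4 = \mu_{D2}$. In the second piece, the combinatorial factor of $2$ from the two possible positions of the index $10$, together with the Dirac identity $\Gamma^{a,10} = \Gamma^a\,\Gamma_{10}$ (valid for $a \neq 10$), converts the bilinear into $i(\overline{\psi}\,\Gamma_a \Gamma_{10}\psi)\wedge e^a \wedge e^{11}$, identifying $h_3 = \mu_{F1}$. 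The same splitting strategy applied to $\mu_7 = \tfrac{1}{5!}(\overline{\psi}\,\Gamma_{a_1\cdots a_5}\psi)\wedge e^{a_1}\wedge\cdots\wedge e^{a_5}$ produces $h_7$ from the all-10d piece and, in the other piece, a combinatorial factor of $5$ from the five placements of the index $10$; together with the identity $\Gamma^{a_1\cdots a_4,\,10} = \Gamma^{a_1\cdots a_4}\Gamma_{10}$ and the sign from $\mu_7 = h_7 - \omega_6\wedge e^{11}$, this yields $\omega_6 = \mu_{D4}$ after absorbing the factor of $\Gamma_{10}$ via the standard chirality identity on the non-chiral IIA spinor $\mathbf{16}\oplus\overline{\mathbf{16}}$.

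The main obstacle is purely bookkeeping: tracking signs (in particular the minus in $\mu_7 = h_7 - \omega_6\wedge e^{11}$), the $1/p!$ conventions implicit in the definition of $\mu_{p+2}$, and the conversion of 11d Dirac bilinears carrying an extra $\Gamma_{10}$ into the 10d type IIA bilinears listed in the statement. No conceptual step is required beyond Propositions \ref{R11AsExtensionOfR10} and \ref{Hom-Prop}, so the proof is indeed the term-by-term computation that the proposition advertises as ``direct''.
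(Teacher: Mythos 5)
Your proposal is correct and takes essentially the same route as the paper, which offers no written proof beyond ``by direct computation'': that computation is precisely your splitting of the 11d indices in $\mu_4$ and $\mu_7$ into the 10d directions and the eleventh one via the decomposition $g_4=\omega_4+h_3\wedge e^{11}$, $g_7=h_7-\omega_6\wedge e^{11}$ of Prop.~\ref{Hom-Prop}, using $\Gamma^{a\,10}=\Gamma^a\Gamma_{10}$ and tracking the combinatorial factors. The only caveats are bookkeeping ones you already flag—prefactor conventions (cf.\ Remark~\ref{dea}) and the fact that the $\omega_6$-component naturally comes out carrying a $\Gamma_{10}$ (as it indeed does in Def.~\ref{IIABraneCocycles}), so no extra ``chirality identity'' is actually needed; the absence of $\Gamma_{10}$ in the proposition's displayed $\mu_{{}_{D4}}$ is a notational inconsistency of the paper rather than something your argument must repair.
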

In view of Prop. \ref{reductionOfMBraneCocycles} we set:
\begin{defn}
 \label{IIABraneCocycles}
Let $C \in {\rm CE}(\mathfrak{string}_{\mathrm{IIA}})$ be given by
\begin{displaymath}
\begin{aligned}
  C
  \;
  & :=
    \overline{\psi}\Gamma_{10}\psi
   + \tfrac{i}{2}
    \overline{\psi} \Gamma^{a_1 a_2}  \psi\wedge e_{a_1} \wedge e_{a_2}
    +
    \tfrac{1}{4!}
    \overline{\psi} \Gamma^{a_1 ... a_4} \Gamma_{10} \psi \wedge e_{a_1} \wedge \cdots \wedge e_{a_4}
   \\
    & +
    \tfrac{i}{6!}
    \overline{\psi} \Gamma^{a_1 ... a_6}  \psi \wedge e_{a_1} \wedge \cdots \wedge e_{a_6}
    +
    \tfrac{1}{8!}
    \overline{\psi} \Gamma^{a_1 ... a_8} \Gamma_{10}  \psi \wedge e_{a_1} \wedge \cdots \wedge e_{a_8}\;.
  \end{aligned}
\end{displaymath}
\end{defn}
Using the above element we will now introduce a cocycle associated naturally with type IIA
D-branes in a background B-field.
\begin{defn}
\label{IIADpbranecocycles}
For $p \in \{0,2,4,6,8\}$ we  define the $Dp$-brane cocycle
$\mu_{{}_{Dp}} \in {\rm CE}(\mathfrak{string}_{\mathrm{IIA}})$
to be given by
$$
\mu_{{}_{Dp}}
  :=
  \left[
     C \wedge \exp(f_2)
  \right]_{p+2}\;,
$$
where
$$
\exp(f_2)
  :=
  1 + f_2 + \tfrac{1}{2} f_2 \wedge f_2 + \tfrac{1}{6} f_2 \wedge f_2 \wedge f_2 + \cdots \;,
$$
and where the square brackets indicate picking out the homogeneous summand of degree $p+2$.
\end{defn}

 \begin{remark}\label{closed}
The elements $\mu_{{}_{Dp}}$ in Definition \ref{IIADpbranecocycles} are non-trivial cocycles, i.e., they are closed and non-exact elements in ${\rm CE}(\mathfrak{string}_{\mathrm{IIA}})$. This is proved in \cite[Sec. 6.1]{CAIB99} (Beware Remark \ref{dea} when comparing prefactors.)
The action functionals induced by these cocycles $\mu_{{}_{Dp}}$
as in \cite{Bouq} are the WZW-terms for the Green-Schwarz-type sigma-model of the D-branes in type IIA super Minkowski spacetime. The element
$f_2$ represents the field strength of the Chan-Paton gauge field on the D-brane and $C$ is the contribution  of the
Ramond-Ramond fields.
\end{remark}

\begin{remark}
Notice how, in the notation of Proposition \ref{Hom-Prop}, we have, up to prefactors,
\[
C = \omega_2 + \omega_4 + \omega_6 +  \text{higher order terms}\;,
\]
so that
\[
\mu_{{}_{D_0}}= \omega_2;
 \qquad \mu_{{}_{D_2}}= \omega_4 + f_2\wedge \omega_2;
 \qquad \mu_{{}_{D_4}}= \omega_6 + f_2\wedge \omega_4 +\tfrac{1}{2}f_2\wedge f_2\wedge \omega_2\;.
\]
Consequently,  the condition that $\mu_{D_p}$ is a cocycle in $\mathfrak{string}_{\mathrm{IIA}}$ translates to the equations
\[
d\omega_2 = 0\;,
 \qquad d\omega_4 = \mu_{{}_{F_1}}\wedge \omega_2\;,
\qquad d\omega_6 = \mu_{{}_{F_1}} \wedge \omega_4
\]
in ${\rm CE}(\mathbb{R}^{9,1\vert \mathbf{16}+\overline{\mathbf{16}}})$. Indeed, these are precisely
 the equations obtained in Proposition \ref{Hom-Prop}.
\end{remark}

\medskip
We will now start making explicit the connection to (twisted) K-theory at the rational level.

\begin{defn} Define $\mathfrak{l}(\mathrm{ku})$ to be the $L_\infty$-algebra
\[
  \mathfrak{l}(\mathrm{ku})=\bigoplus_{p\text{ even}}b^{p+1}\R\;.
\]
\end{defn}

\begin{remark}
 {\bf (i)} The $L_\infty$-algebra $\mathfrak{l}(\mathrm{ku})$ is a minimal Sullivan model for the  rationalization of the
 connective K-theory spectrum.
\item {\bf (ii)}
Notice that the Chevalley-Eilenberg algebra of $\mathfrak{l}(\mathrm{ku})$ is
\[
{\rm CE}(\mathfrak{l}(\mathrm{ku}))= \big(\R[ \{\omega_{2p}\}_{p = 1,2,\dots}]\,;\, d\omega_{2p}=0\big)\;,
\]
i.e., the even closed forms, as appropriate for rationalization of K-theory, via the Chern character, with
target even rational cohomology.
\item {\bf (iii)}
The direct sum of cocycles
\[
\mu_{{}_D}=\underset{p=0,2,4,6,8}{\bigoplus} \mu_{{}_{D p}}
\]
defines an $L_\infty$-morphism $\mu_{{}_D}\colon \mathbb{R}^{9,1\vert \mathbf{16} + \overline{\mathbf{16}}}\to
\bigoplus_{p=0, 2, 4, 6, 8}b^{p+1}\R \hookrightarrow  \mathfrak{l}(\mathrm{ku})$. Hence we see that
super Minkowski spacetime $\mathbb{R}^{9,1\vert \mathbf{16} + \overline{\mathbf{16}}}$, locally
modeling super spacetimes in 10d type IIA supergravity, fits into a diagram of super
$L_\infty$-morphisms of the form

\begin{displaymath}
\xymatrix{
  \mathfrak{string}_{\mathrm{IIA}} \ar[dd]_-{{\rm hofib}(\mu_{{}_{F1}})}\ar[rrd] \ar[rrrr]^-{\mu_{{}_{D}}}
  &&&& \mathfrak{l}(\mathrm{ku}) \ar[lld]
\\
 & &0\ar[dd]
 \\
 \mathbb{R}^{9,1\vert \mathbf{16}+ \overline{\mathbf{16}}}  \ar[drr]_{\mu_{{}_{F1}}}
 &&&&  \\
 &&
   b^2 \mathbb{R}  \;,&&
}
\end{displaymath}
where the left bottom square is a homotopy pullback.
\end{remark}

By the above remark, it is therefore natural to ask whether the the cocycles $\mu_{{}_{D p}}$ for the D-branes are $b \mathbb{R}$-equivariant and so descend to super-Minkowski spacetime as twisted cocycles, in the
sense of \cite{NSS12}, and in analogy to the descent of the M5-brane cocycle considered in section \ref{Sec M}.
 More explicitly, we are asking whether we can complete the above diagram to a commutative diagram of the form
\begin{displaymath}
\xymatrix{
  \mathfrak{string}_{\mathrm{IIA}}
     \ar[dd]_-{{\rm hofib}(\mu_{{}_{F1}})}\ar[rrd] \ar[rrrr]^-{\mu_{{}_{D}}}
  &&&&
  \mathfrak{l}(\mathrm{ku})
    \ar[lld]\ar[dd]^-{{\rm hofib}(\phi)}
\\
 & &0\ar[dd]
 \\
 \mathbb{R}^{9,1\vert \mathbf{16}+ \overline{\mathbf{16}}}
   \ar[drr]_{\mu_{{}_{F1}}}
   \ar[rrrr]^<<<<<<<<<<<{\mu_{{}_{F1/D}}^{\mathrm{IIA}}}|(.46)\hole|(.47)\hole|(.49)\hole|(.50)\hole|(.51)\hole
 &&&& \mathfrak{l}(\mathrm{ku}/\!/BU(1))\;,\ar[dll]^{\phi} \\
&&
  b^2 \R &&
}
\end{displaymath}
for a suitable $L_\infty$-algebra $\mathfrak{l}(\mathrm{ku}/\!/BU(1))$, in such a way that
both front faces of the prism are homotopy pullbacks.

\begin{defn}
\label{AbstractTwistedDeRham}\hypertarget{AbstractTwistedDeRham}{}
Write $\mathfrak{l}(\mathrm{ku}/\!/BU(1))$
for the $L_\infty$-algebra whose Chevalley-Eilenberg algebra has generators
$\omega_{2p}$ for $p$ a positive integer, and $h_3$, each in the degree indicated by its subscript,
with non-trivial differential given by $d(\omega_{2(p+1)}) = h_3 \wedge \omega_{2p}$:

\begin{displaymath}
{\rm CE}
  \big(
    \mathfrak{l}(\mathrm{ku}/\!/BU(1))    \big)
  :=
  \left\{
    \R[ \{\omega_{2p}, h_3\}_{p = 1,2,\dots}] \;;\, dh_3=0, ~ d \omega_{2(p+1)} = h_3 \wedge \omega_{2p}
  \right\}
  \,.
\end{displaymath}
\end{defn}

\medskip
The following proposition shows that the three $L_\infty$-algebras are compatible in a nice way,
in the sense that indeed $\mathfrak{l}(\mathrm{ku}/\!/BU(1))$ is a rational quotient of $\mathfrak{l}(\mathrm{ku})$
by $\mathfrak{l}(BU(1))$.

\begin{prop}
\label{}\hypertarget{}{}
There is a natural homotopy fiber sequence of $L_\infty$-algebras
\[
\xymatrix{
  \mathfrak{l}(\mathrm{ku})
    \ar[r]\ar[d]&
    0\ar[d]\\
  \mathfrak{l}(\mathrm{ku}/\!/BU(1))\ar[r]
   &
   b^2 \R\;.
}
\]
\end{prop}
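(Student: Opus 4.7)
The plan is to identify an explicit fibration from $\mathfrak{l}(\mathrm{ku}/\!/BU(1))$ to $b^2\R$ whose strict fiber is $\mathfrak{l}(\mathrm{ku})$, and then invoke the principle (used throughout the paper) that the homotopy fiber of a fibration coincides with its ordinary fiber.

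First, I would define the morphism $\mathfrak{l}(\mathrm{ku}/\!/BU(1))\to b^2\R$ dually, as the inclusion of DGCAs
\[
{\rm CE}(b^2\R)=(\R[h_3];\,dh_3=0)\hookrightarrow {\rm CE}(\mathfrak{l}(\mathrm{ku}/\!/BU(1))),
\]
sending $h_3$ to the generator $h_3$. I would then check that this inclusion is a relative Sullivan algebra: adjoining the generators $\omega_{2p}$ in order of increasing $p$, each new $\omega_{2(p+1)}$ has differential $h_3\wedge \omega_{2p}$ lying in the sub-DGCA already constructed from the previously adjoined generators (together with $h_3$). By \cite[Prop.~4.36, Prop.~4.42]{pridham}, cited in the introduction, this means the dual morphism $\mathfrak{l}(\mathrm{ku}/\!/BU(1))\to b^2\R$ is a fibration of $L_\infty$-algebras.

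Since the homotopy fiber of a fibration is computed as the ordinary fiber, it suffices to identify the latter. Dually, the ordinary fiber is the pushout of DGCAs
\[
{\rm CE}(\mathfrak{l}(\mathrm{ku}/\!/BU(1)))\otimes_{{\rm CE}(b^2\R)} \R,
\]
that is, the quotient of ${\rm CE}(\mathfrak{l}(\mathrm{ku}/\!/BU(1)))$ by the ideal generated by $h_3$. Setting $h_3=0$ kills the differentials of all the $\omega_{2p}$'s, leaving
\[
\bigl(\R[\{\omega_{2p}\}_{p=1,2,\dots}];~d\omega_{2p}=0\bigr)={\rm CE}(\mathfrak{l}(\mathrm{ku})),
\]
with the projection map inducing the evident $L_\infty$-morphism $\mathfrak{l}(\mathrm{ku})\to \mathfrak{l}(\mathrm{ku}/\!/BU(1))$. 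This yields the asserted homotopy fiber sequence.

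There is no real obstacle here: the only thing to verify is the relative Sullivan property, which is manifest from the Definition \ref{AbstractTwistedDeRham} since the generators can be filtered by degree in a way that makes each differential land in the previously adjoined piece. Naturality is clear, since all morphisms are given by the obvious inclusions/projections of generators.
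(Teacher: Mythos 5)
Your proposal is correct and is essentially the paper's own argument: both rest on the observation that the inclusion ${\rm CE}(b^2\R)\hookrightarrow {\rm CE}\big(\mathfrak{l}(\mathrm{ku}/\!/BU(1))\big)$ is a relative Sullivan algebra, after which the strict fiber/pushout obtained by setting $h_3=0$ is ${\rm CE}(\mathfrak{l}(\mathrm{ku}))$. The only difference is cosmetic: the paper states this as a (homotopy) pushout square of DGCAs along a cofibration, while you dualize and say the map $\mathfrak{l}(\mathrm{ku}/\!/BU(1))\to b^2\R$ is a fibration whose ordinary fiber computes the homotopy fiber — the same fact in the dual language.
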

\begin{proof}
Passing to the dual Chevalley-Eilenberg algebras, we have a natural commutative diagram
 \[
\xymatrix{
\big( \mathbb{R}[ h_{3}];~ dh_{3}=0\big)\ar[r]\ar[d]&  \big(\R[\{\omega_{2p}, h_3\}_{p = 1,2,\dots}]\;;\, dh_3=0,\, d \omega_{2(p+1)} = h_3 \wedge \omega_{2p}\big)\ar[d]\\
\R\ar[r]& \big(\R[ \{\omega_{2p}\}_{p = 1,2,\dots}]\,;\, d\omega_{2p}=0\big)\;,
}
\]
where each morphism maps a generator to the generator with the same name, if present, and to zero otherwise. This diagram is clearly a pushout. Moreover, since the top horizontal morphism is a relative Sullivan algebra, it is also a homotopy pushout. Therefore, its dual diagram is a homotopy pullback.
\end{proof}
%
%
%
%

Now we connect back to super Minkowski spacetime. To exhibit the morphism
$\mathbb{R}^{9,1\vert \mathbf{16}+ \overline{\mathbf{16}}}\to \mathfrak{l}(\mathrm{ku}/\!/BU(1))$
in the homotopy category of $L_\infty$-algebras, we consider a resolution $\mathbb{R}^{9,1\vert \mathbf{16} + \overline{\mathbf{16}}}_{\mathrm{res}}$ of $\mathbb{R}^{9,1\vert \mathbf{16}+ \overline{\mathbf{16}}}$:
\begin{defn}
\label{ResolutionOfTypeIIASuperLieAlgebra}\hypertarget{ResolutionOfTypeIIASuperLieAlgebra}{}
Write $\mathbb{R}^{9,1\vert \mathbf{16} + \overline{\mathbf{16}}}_{\mathrm{res}}$ for the super
$L_\infty$-algebra whose Chevalley-Eilenberg algebra is obtained by adding to the Chevalley-Eilenberg
algebra of $\mathbb{R}^{9,1\vert \mathbf{16} + \overline{\mathbf{16}}}$ two
generators $f_2$ and $h_3$, of degrees $2$ and $3$ respectively, with $d f_2 = h_3- \mu_{{}_{F1}}$
and $dh_3=0$:
\[
\mathrm{CE}(\mathbb{R}^{9,1\vert \mathbf{16} + \overline{\mathbf{16}}}_{\mathrm{res}}):=
\Big({\rm CE}(\mathbb{R}^{9,1\vert \mathbf{16} + \overline{\mathbf{16}}})\otimes\R[f_2, h_3]\;;
~
d f_2 = h_3- \mu_{{}_{F1}}, ~ dh_3=0
  \Big)\;.
\]
\end{defn}

\begin{remark}\label{differential}
The DG-algebra $\mathrm{CE}(\mathbb{R}^{9,1\vert \mathbf{16} + \overline{\mathbf{16}}}_{\mathrm{res}})$ is isomorphic to the graded commutative algebra
\[\mathrm{CE}(\mathfrak{string}_{\mathrm{IIA}})\otimes \R[h_3]
\] endowed with the differential
\[
d=d_{\mathrm{IIA}} + h_3 {\tiny \frac{\partial}{\partial f_2}}\;,
\]
where $d_{\mathrm{IIA}}$ is the differential of $\mathrm{CE}(\mathfrak{string}_{\mathrm{IIA}})$.
\end{remark}

\medskip
In order to complete our diagrams, we will follow similar steps to the ones we
established for the M-theory supercocycles in the previous section.

\begin{prop}
The canonical morphism $\mathbb{R}^{9,1\vert \mathbf{16} + \overline{\mathbf{16}}}_{\mathrm{res}} \overset{\simeq}{\longrightarrow}
\mathbb{R}^{9,1\vert \mathbf{16} + \overline{\mathbf{16}}}$, dual to the obvious inclusion  ${\rm CE}(\mathbb{R}^{9,1\vert \mathbf{16} + \overline{\mathbf{16}}})\hookrightarrow {\rm CE}(\mathbb{R}^{9,1\vert \mathbf{16} + \overline{\mathbf{16}}}_{\mathrm{res}})$, is
an equivalence of $L_\infty$-algebras. It factors the natural projection
$\mathfrak{string}_{\mathrm{IIA}} \to
\mathbb{R}^{9,1\vert \mathbf{16} + \overline{\mathbf{16}}}$ through the morphism
$\mathfrak{string}_{\mathrm{IIA}} \to \mathbb{R}^{9,1\vert \mathbf{16} + \overline{\mathbf{16}}}_{\mathrm{res}}$ whose dual map is
\begin{align*}
{\rm CE}(\mathbb{R}^{9,1\vert \mathbf{16} + \overline{\mathbf{16}}}_{\mathrm{res}})&
\longrightarrow {\rm CE}(\mathfrak{string}_{\mathrm{IIA}})\\
f_2 &\longmapsto f_2\\
h_3 &\longmapsto 0
\end{align*}
and the identity on all other generators.
\end{prop}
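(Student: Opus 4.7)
The plan is to mirror, essentially verbatim, the argument used for Proposition \ref{resolution-one}, since the present situation is structurally identical: we are just adjoining a (contractible) pair of generators $(f_2, h_3)$ with $d f_2 = h_3 - \mu_{{}_{F1}}$ instead of the pair $(h_3, g_4)$ with $d h_3 = g_4 - \mu_4$, and the degree parity plays no role in the relevant computation.

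First I would observe that the factorization statement and the commutativity of the relevant triangle of DGCA-morphisms are purely formal. On generators of ${\rm CE}(\mathbb{R}^{9,1\vert \mathbf{16}+\overline{\mathbf{16}}})$ everything is the identity, and on the two extra generators one sends $f_2 \mapsto f_2$ and $h_3\mapsto 0$. Compatibility with the differentials reduces to the single check
\[
d(f_2) = -\mu_{{}_{F1}} \;\in\; {\rm CE}(\mathfrak{string}_{\mathrm{IIA}})
\quad \text{vs.}\quad
\text{image of } d f_2 = h_3 - \mu_{{}_{F1}},
\]
which agrees since $h_3 \mapsto 0$. The factorization through the inclusion ${\rm CE}(\mathbb{R}^{9,1\vert \mathbf{16}+\overline{\mathbf{16}}})\hookrightarrow {\rm CE}(\mathbb{R}^{9,1\vert \mathbf{16}+\overline{\mathbf{16}}}_{\mathrm{res}})$ is then immediate by inspection on generators.

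The only genuine content is the claim that the inclusion ${\rm CE}(\mathbb{R}^{9,1\vert \mathbf{16}+\overline{\mathbf{16}}}) \hookrightarrow {\rm CE}(\mathbb{R}^{9,1\vert \mathbf{16}+\overline{\mathbf{16}}}_{\mathrm{res}})$ is a quasi-isomorphism. Exactly as in Proposition \ref{resolution-one}, I would check this by computing the quotient DGCA and showing it is contractible. One obtains
\[
{\rm CE}(\mathbb{R}^{9,1\vert \mathbf{16}+\overline{\mathbf{16}}}_{\mathrm{res}}) \big/ {\rm CE}(\mathbb{R}^{9,1\vert \mathbf{16}+\overline{\mathbf{16}}})
\;\cong\; \bigl(\R[f_2,h_3]\,;\; d f_2 = h_3,\; d h_3 = 0\bigr) \;=\; \R[f_2, df_2],
\]
which is the free acyclic DGCA on a single generator $f_2$ of degree 2 and is therefore quasi-isomorphic to $\R$. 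Since ${\rm CE}(\mathbb{R}^{9,1\vert \mathbf{16}+\overline{\mathbf{16}}}_{\mathrm{res}})$ is a semi-free extension of ${\rm CE}(\mathbb{R}^{9,1\vert \mathbf{16}+\overline{\mathbf{16}}})$ along this acyclic inclusion, the long exact sequence (or equivalently a small spectral sequence) argument gives that the inclusion itself is a quasi-isomorphism, and dually that $\mathbb{R}^{9,1\vert \mathbf{16}+\overline{\mathbf{16}}}_{\mathrm{res}} \to \mathbb{R}^{9,1\vert \mathbf{16}+\overline{\mathbf{16}}}$ is an equivalence of $L_\infty$-algebras.

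I do not anticipate a serious obstacle: the only subtlety relative to Proposition \ref{resolution-one} is the change in degree parity of the added generators (now an even $f_2$ and an odd $h_3$, as opposed to an odd $h_3$ and an even $g_4$), but the Koszul-contractibility of the two-generator complex with $d f_2 = h_3$ is insensitive to this swap. Thus the whole proof is essentially a reference back to the earlier argument, with the symbols relabelled.
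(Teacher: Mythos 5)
Your proposal is correct and matches the paper exactly: the paper's own proof of this proposition is literally the one-line instruction to use the same strategy as in Proposition \ref{resolution-one}, which is precisely the quotient-algebra contractibility argument you carry out (with the routine check that $f_2\mapsto f_2$, $h_3\mapsto 0$ respects the differentials).
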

\begin{proof}Use the same strategy as in the proof of Proposition \ref{resolution-one}.
\end{proof}

\begin{prop}
The 3-cocycle $\tilde{\mu}_{{}_{F1}}\colon \mathbb{R}^{9,1\vert \mathbf{16} + \overline{\mathbf{16}}}_{\mathrm{res}}\to b^2 \R$
dual to the obvious inclusion $\R[h_3]\to {\rm CE}(\mathbb{R}^{9,1\vert \mathbf{16} + \overline{\mathbf{16}}}_{\mathrm{res}})$ fits into a homotopy commutative diagram of super $L_\infty$-algebras
\[
\xymatrix{ \mathbb{R}^{9,1\vert \mathbf{16} + \overline{\mathbf{16}}}_{\mathrm{res}}\ar[r]^{\simeq}\ar[dr]_{\tilde{\mu}_{{}_{F1}}}&  \mathbb{R}^{9,1\vert \mathbf{16} + \overline{\mathbf{16}}}
\ar[d]^{\mu_{{}_{F1}}}\\
& b^2 \R\;.
}
\]
\end{prop}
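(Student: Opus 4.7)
The plan is to follow the template of the proof of Proposition \ref{equivalent-cocycle} essentially verbatim: pass to the dual Chevalley-Eilenberg picture, and construct an explicit right-homotopy in the path-space object $\mathrm{CE}(\mathbb{R}^{9,1\vert \mathbf{16} + \overline{\mathbf{16}}}_{\mathrm{res}}) \otimes \Omega^\bullet(\Delta^1)$, where $\Omega^\bullet(\Delta^1) = \R[t, dt]$. This gives a valid path space object for right homotopies by \cite[Lemma 4.32]{pridham}.

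After dualization, the statement reduces to showing that two DGCA morphisms $(\R[h_3],\, dh_3 = 0) \to \mathrm{CE}(\mathbb{R}^{9,1\vert \mathbf{16} + \overline{\mathbf{16}}}_{\mathrm{res}})$ agree up to right homotopy. The first, dual to $\tilde{\mu}_{{}_{F1}}$, sends the generator $h_3$ of the source to the new generator $h_3$ of the resolution; the second, dual to the composition of the quasi-isomorphism with $\mu_{{}_{F1}}$, sends the generator $h_3$ to the element $\mu_{{}_{F1}} \in \mathrm{CE}(\mathbb{R}^{9,1\vert \mathbf{16} + \overline{\mathbf{16}}})$ followed by the canonical inclusion into $\mathrm{CE}(\mathbb{R}^{9,1\vert \mathbf{16} + \overline{\mathbf{16}}}_{\mathrm{res}})$. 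Since the source is free on a single closed degree-$3$ generator, it suffices to exhibit one closed degree-$3$ element $\xi_3(t) + \xi_2(t) \wedge dt$ in the path object with $\xi_3(0) = h_3$ and $\xi_3(1) = \mu_{{}_{F1}}$.

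The key observation is that the defining relation $d f_2 = h_3 - \mu_{{}_{F1}}$ of the resolution makes $f_2$ a primitive for the difference of the two endpoints, which is exactly what enables the homotopy. The natural candidate is
\[
\xi \;=\; \bigl( (1 - t)\, h_3 + t\, \mu_{{}_{F1}} \bigr) \;-\; f_2 \wedge dt,
\]
whose values at $t = 0$ and $t = 1$ are immediate. Closure is a direct Koszul-sign check: since $h_3$ and $\mu_{{}_{F1}}$ are both closed, $d\bigl( (1-t) h_3 + t \mu_{{}_{F1}} \bigr)$ contributes $(-1)^{3} (\mu_{{}_{F1}} - h_3) \wedge dt$, which is cancelled exactly by $d(-f_2 \wedge dt) = -(h_3 - \mu_{{}_{F1}}) \wedge dt$. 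No genuine obstacle is anticipated: the argument is the line-by-line analog of Proposition \ref{equivalent-cocycle} under the substitutions $(g_4, h_3, \mu_4) \leftrightarrow (h_3, f_2, \mu_{{}_{F1}})$, the only substantive difference being the Koszul sign forced by the odd degree of the interpolated forms, which is responsible for the minus sign in front of $f_2 \wedge dt$ (whereas in the M-theory case the corresponding term entered with a plus sign because $g_4, \mu_4$ were of even degree).
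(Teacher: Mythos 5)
Your proposal is correct and is exactly the paper's intended argument: the paper's proof simply says to reuse the strategy of Proposition \ref{equivalent-cocycle}, i.e.\ exhibit a right homotopy in $\mathrm{CE}(\mathbb{R}^{9,1\vert \mathbf{16} + \overline{\mathbf{16}}}_{\mathrm{res}})\otimes\Omega^\bullet(\Delta^1)$, which you do explicitly. Your element $\bigl((1-t)h_3 + t\,\mu_{{}_{F1}}\bigr) - f_2\wedge dt$ is indeed closed (the Koszul sign from commuting $dt$ past the odd-degree interpolants is handled correctly, forcing the minus sign on $f_2\wedge dt$ given $d f_2 = h_3 - \mu_{{}_{F1}}$) and has the required endpoints, so nothing is missing.
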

\begin{proof}
Use the same strategy as in the proof of Proposition \ref{equivalent-cocycle}.
\end{proof}

With the above results, we are now able to complete the desired diagram:
\begin{theorem}
\label{descentoftheIIAcocycles}\hypertarget{}{}
The F1/Dp-brane cocycles of type IIA fit into a commutative diagram of super $L_\infty$-algebras
\begin{displaymath}
\xymatrix{
&  \mathfrak{string}_{\mathrm{IIA}} \ar[dd]_-{{\rm hofib}(\mu_{{}_{F1}})}\ar[rrd] \ar[rrrr]^-{\mu_{{}_{D}}} &&&& \mathfrak{l}(\mathrm{ku}) \ar[lld]\ar[dd]^-{{\rm hofib}(\phi)}
\\
& & &0\ar[dd]
 \\
 \mathbb{R}^{9,1\vert \mathbf{16}+ \overline{\mathbf{16}}}& \mathbb{R}^{9,1\vert \mathbf{16}+ \overline{\mathbf{16}}}_{\mathrm{res}} \ar[l]_{\simeq} \ar[drr]_{\tilde{\mu}_{{}_{F1}}}
 \ar[rrrr]^<<<<<<<<<<<{\mu_{{}_{F1/D}}^{\mathrm{IIA}}}|(.46)\hole|(.47)\hole|(.49)\hole|(.50)\hole|(.51)\hole^{}
 &&&& \mathfrak{l}(\mathrm{ku}/\!/BU(1))\;,\ar[dll]^{\phi} \\
&&& b^2 \R&&
}
\end{displaymath}
where the two front faces of the prism are homotopy pullbacks.
\end{theorem}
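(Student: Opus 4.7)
The strategy is to construct the missing bottom map $\mu_{{}_{F1/D}}^{\mathrm{IIA}}$ in the dual Chevalley--Eilenberg picture, and then verify the various commutativities and homotopy pullback conditions. Concretely, I define the DGCA morphism
\[
  \mathrm{CE}\bigl(\mathfrak{l}(\mathrm{ku}/\!/BU(1))\bigr)\;\longrightarrow\;\mathrm{CE}\bigl(\mathbb{R}^{9,1\vert \mathbf{16}+\overline{\mathbf{16}}}_{\mathrm{res}}\bigr)
\]
by sending the generator $h_3$ to the generator $h_3$ of the resolution, and for each $p\geq 1$ sending $\omega_{2p} \mapsto [C\wedge \exp(f_2)]_{2p}$, where $C$ is as in Definition \ref{IIABraneCocycles} and $[-]_{2p}$ extracts the homogeneous summand of that total degree. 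Note that on the subalgebra generated by the $\omega_{2p}$ (alone) this literally reproduces the components $\mu_{{}_{D(2p-2)}}$ of $\mu_{{}_{D}}$, so this definition is forced upon us by what commutativity of the back face of the prism must look like.

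The essential computation is that this indeed defines a chain map, i.e.\ that
\[
  d_{\mathrm{res}}\bigl[C\wedge \exp(f_2)\bigr]_{2(p+1)} \;=\; h_3\wedge \bigl[C\wedge \exp(f_2)\bigr]_{2p}.
\]
For this I would first observe, via Remark \ref{differential}, that the fact from Remark \ref{closed} that each $\mu_{{}_{Dp}}$ is closed in $\mathrm{CE}(\mathfrak{string}_{\mathrm{IIA}})$ is equivalent to the single identity $d_{\mathrm{IIA}}(C\wedge \exp(f_2)) = 0$. Expanded, this says exactly $dC = C\wedge \mu_{{}_{F1}}$ at the level of the underlying super-Minkowski CE-algebra (where here $d$ is the Chevalley-Eilenberg differential on $\mathbb{R}^{9,1\vert\mathbf{16}+\overline{\mathbf{16}}}$, not involving $f_2$). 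Carrying this identity over into $\mathrm{CE}(\mathbb{R}^{9,1\vert \mathbf{16}+\overline{\mathbf{16}}}_{\mathrm{res}})$, where now $df_2 = h_3-\mu_{{}_{F1}}$, a two-line calculation gives
\[
  d_{\mathrm{res}}\bigl(C\wedge \exp(f_2)\bigr) \;=\; (dC - C\wedge \mu_{{}_{F1}})\wedge \exp(f_2) \;+\; C\wedge h_3 \wedge \exp(f_2) \;=\; h_3\wedge C\wedge \exp(f_2),
\]
using that $h_3$ has odd degree while $C\wedge \exp(f_2)$ is a sum of even-degree terms. Extracting the degree-$(2p+3)$ part yields the required relation, so the assignment above is a DGCA homomorphism.

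Next I would verify the commutativity of all remaining sub-diagrams of the prism at the level of CE-algebras: the top back face is tautologically commutative because on $\mathrm{CE}(\mathfrak{l}(\mathrm{ku}))$ the generator $\omega_{2p}$ is sent to $[C\wedge\exp(f_2)]_{2p}$ either by first applying $\mu_{{}_{D}}$ and then restricting via $\mathfrak{string}_{\mathrm{IIA}}\to\mathbb{R}^{9,1\vert\mathbf{16}+\overline{\mathbf{16}}}_{\mathrm{res}}$ (which sends $h_3\mapsto 0$, $f_2\mapsto f_2$), or by first applying $\mu_{{}_{F1/D}}^{\mathrm{IIA}}$ and then the quotient $\mathfrak{l}(\mathrm{ku}/\!/BU(1))\to \mathfrak{l}(\mathrm{ku})$; similarly the generator $h_3$ maps to $0$ in either order. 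The two triangles landing in $b^2\R$ commute because they send the single generator $h_3$ of $\mathrm{CE}(b^2\R)$ to the generator $h_3$ of $\mathrm{CE}(\mathbb{R}^{9,1\vert \mathbf{16}+\overline{\mathbf{16}}}_{\mathrm{res}})$ and to the generator $h_3$ of $\mathrm{CE}(\mathfrak{l}(\mathrm{ku}/\!/BU(1)))$, respectively, by construction of $\tilde{\mu}_{{}_{F1}}$ and of $\phi$.

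Finally, the two front faces are homotopy pullbacks: the right front face is precisely the homotopy fibre sequence $\mathfrak{l}(\mathrm{ku})\to\mathfrak{l}(\mathrm{ku}/\!/BU(1))\to b^2\R$ established in the preceding (unnumbered) proposition via a Sullivan pushout argument. The left front face is the homotopy fibre sequence $\mathfrak{string}_{\mathrm{IIA}}\to \mathbb{R}^{9,1\vert \mathbf{16}+\overline{\mathbf{16}}}_{\mathrm{res}}\xrightarrow{\tilde{\mu}_{{}_{F1}}} b^2\R$: this follows because $\mathfrak{string}_{\mathrm{IIA}}$ is the homotopy fibre of $\mu_{{}_{F1}}$ by definition, while $\mathbb{R}^{9,1\vert \mathbf{16}+\overline{\mathbf{16}}}_{\mathrm{res}}\xrightarrow{\simeq}\mathbb{R}^{9,1\vert \mathbf{16}+\overline{\mathbf{16}}}$ is a quasi-isomorphism intertwining $\tilde{\mu}_{{}_{F1}}$ with $\mu_{{}_{F1}}$ up to homotopy. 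The main obstacle in the whole argument is the chain map check; everything else is formal manipulation or direct citation of earlier propositions.
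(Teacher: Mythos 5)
Your proposal is correct and follows essentially the same route as the paper: the map is defined dually on CE-algebras by $h_3\mapsto h_3$, $\omega_{2p}\mapsto \mu_{{}_{D(2p-2)}}=[C\wedge\exp(f_2)]_{2p}$, and the only substantive check is the compatibility with differentials, which the paper also carries out using Remark \ref{differential} (the decomposition $d=d_{\mathrm{IIA}}+h_3\,\partial/\partial f_2$) together with the closedness of the $\mu_{{}_{Dp}}$ from Remark \ref{closed}. Your repackaging of that computation via the single identity $dC=C\wedge\mu_{{}_{F1}}$ is just a mild reorganization of the same argument, and your treatment of the remaining commutativities and of the two homotopy pullback faces matches the paper's reliance on the preceding propositions.
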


\begin{proof}
The dual diagram of the prism is the following diagram of CE-algebras

\begin{displaymath}
\xymatrix{
  \Big({
    {\rm CE}(\mathbb{R}^{9,1\vert \mathbf{16}+\overline{\mathbf{16}}})
    \otimes
   \mathbb{R}[ f_2 ]
    \,;
    \atop{d f_2 = -\mu_{{}_{F1}}}}
  \Big)
  &&&& \Big({\R[ \{\omega_{2p}\}_{p = 1,2,\dots}]\,;\,\atop{ d\omega_{2p}=0}}\Big)
  \ar[llll]_-{_{( \omega_{2(p+1)} \mapsto \mu_{{}_{D p}})_{p=0,1,2,3,4}}}
\\
 & &\R\ar[rru]\ar[llu]
 \\
 \left({{\rm CE}(\mathbb{R}^{9,1\vert \mathbf{16} + \overline{\mathbf{16}}})\otimes\R[f_2, h_3]\;;
\, \atop{
d f_2 = h_3- \mu_{{}_{F1}};\,\, dh_3=0}}\right)
    \ar[uu]^-{h_3\mapsto 0}
 &&&&  \left({
    \R[ \{\omega_{2p}, h_3\}_{p = 1,2,\dots}] \;;\atop{\, dh_3=0;\, d \omega_{2(p+1)} = h_3 \wedge \omega_{2p}}}
  \right)\ar[llll]|(.49)\hole|(.50)\hole|(.51)\hole|(.52)\hole_{( \omega_{2(p+1)} \mapsto
   \mu_{{}_{D p}})_{p=0,1,2,3,4}}\ar[uu]_-{h_3\mapsto0}\;. \\
&& \big( \mathbb{R}[ h_{3}]\,;~dh_{3}=0\big)\ar[ull]^{}\ar[urr]\ar[uu]|(.57)\hole&&
}
\end{displaymath}

\noindent One only needs to check that the bottom horizontal map is indeed a homomorphism of DGCAs. To see this, recall from Remark \ref{differential} that the   differential of $\mathrm{CE}(\mathbb{R}^{9,1\vert \mathbf{16} + \overline{\mathbf{16}}}_{\mathrm{res}})$ may be written as

\begin{displaymath}
d = d_{\mathrm{IIA}} + h_3 \wedge \frac{\partial}{\partial f_2}
  \,,
\end{displaymath}
and from Remark \ref{closed} that the elements $\mu_{{}_{Dp}}$ are closed in ${\rm CE}(\mathfrak{string}_{\mathrm{IIA}})$. Then, recalling the definition of the Dp-brane cocycles $\mu_{{}_{Dp}}$, i.e.,
$
\mu_{{}_{Dp}} := [C \wedge \exp(f_2)]_{p+2}
$,
we see that in ${\rm CE}(\mathbb{R}^{9,1\vert \mathbf{16}+\overline{\mathbf{16}}}_{\mathrm{res}})$ we have
\begin{displaymath}
\begin{aligned}
    d\mu_{{}_{D2(p+1)}}
    & =
    \underset{= 0}{\underbrace{d_{\mathrm{IIA}}(\mu_{{}_{D2(p+1)}})}}
    +
    h_3 \wedge \frac{\partial}{\partial f_2} [C \wedge \exp(f_2)]_{2p+4}
    \\
    & =
    [C \wedge h_3 \wedge \exp(f_2)]_{2p+5}
    \\
    & = h_3 \wedge [C \wedge \exp(f_2)]_{2p+2}
    \\
    & = h_3 \wedge \mu_{{}_{D2p}}\;.
  \end{aligned}
\end{displaymath}
This equation precisely says that the bottom horizontal arrow preserves the differentials.
\end{proof}


\medskip
The above result demonstrates that the type IIA F1-brane and D-brane cocycles with
$\mathbb{R}$-coefficients indeed descend to super-Minkowski spacetime as one single
cocycle with coefficients in the homotopy quotient
$\Big(
       \underset{p = 0,2,4,6,8}{\bigoplus}
       \mathbf{B}^{2p+1}\mathbb{R}
     \Big)
     /\!/
     \mathbf{B} \mathbb{R}$.

\medskip

We close by making explicit how this is a rational model for twisted K-theory.
For this we use the general theory of twisted cohomology from \cite{NSS12}:
Let $A$ be any homotopy type which represents some cohomology theory
(for instance the degree-0 space in a spectrum).
Then a higher homotopy action $\rho$ of some $\infty$-group
$G$ (for instance the homotopy type of some topological group)
on $A$ is equivalently encoded in a homotopy fiber sequence of the form
$$
  \xymatrix{
    A \ar[r] & A/\!/G \ar[d]^{p_\rho}
    \\
    & ~B G\;,
  }
$$
where the homotopy type $A/\!/G$ is identified thereby with the homotopy quotient of $A$ by the
$\infty$-action of $G$ \cite[section 4.1]{NSS12}.

\medskip
We may alternatively think of this as exhibiting an $A$-fiber $\infty$-bundle $A/\!/G$ over $B G$,
namely the $A$-fiber bundle which is associated via the given action to the universal
principal $G$-$\infty$-bundle $E G$ (the Borel construction):
$$
  A/\!/G \simeq E G \times_G A\;.
$$
Given this data, then a twist for the $A$-cohomology of a space $X$ is equivalently a map
$\tau \colon X \longrightarrow B G$, hence a $G$-principal $\infty$-bundle $P_\tau \to X$,
and a cocycle in $\tau$-twisted $A$-cohomology of $X$ is a diagram as on the left in the following
$$
  \left\{
  \raisebox{20pt}{
  \xymatrix{
    X \ar[dr]_{\tau}^{\ }="t" \ar@{-->}[rr]^c_{\ }="s" && A/\!/G \ar[dl]^{p_\rho}
    \\
    & B G
    %
  }
  }
  \right\}
   \;\;\;\;
   \simeq
   \;\;\;\;
   \left\{
   \raisebox{20pt}{
   \xymatrix{
     & P_\tau \times_G A \ar[d]
     \\
     X \ar@{-->}[ur]^-{\sigma} \ar@{=}[r] & X
   }
   }
   \right\}\;.
$$
This is discussed in \cite[section 4.2]{NSS12}. Equivalently, as shown on the right,
this is a section $\sigma$ of the $A$-fiber $\infty$-bundle
$P_\tau \times_G A$
that is $\rho$-associated to the twist bundle $P_\tau$.
\footnote{Here and in the following all diagrams are filled by specified homotopies, which we
notationally suppress.}
This second perspective of twisted cohomology, in terms of sections of bundles
of coefficients spaces (i.e., ``local coefficients'') is prominently
reflected in the literature on twisted ordinary cohomology and twisted
K-theory. For our purposes here the equivalent perspective on the left above, in terms of
maps into homotopy quotients is more directly compared to the
rational data which we derived above.

\medskip
Now consider the case that $A = \Omega^\infty \mathcal{A}$ is the degree-0
space of a spectrum $\mathcal{A}$. Then if the action of $G$ extends to an
action on the spectrum, then $\mathcal{A}/\!/G$ becomes a \emph{parameterized spectrum}
over $B G$, namely a sequence of spaces $(\mathcal{A}/\!/G)_n$ for $n \in \mathbb{N}$,
each equipped with a retraction onto $B G$
$$
  \mathrm{id} : B G \longrightarrow (\mathcal{A}/\!/G)_n \longrightarrow B G
$$
and equipped with weak equivalences
$$
  (\mathcal{A}/\!/G)_n
    \stackrel{\simeq}{\longrightarrow}
  B G \underset{(\mathcal{A}/\!/G)_{n+1}}{\times} B G
$$
into the homotopy fiber product of the map $B G \to (\mathcal{A}/\!/G)_{n+1}$ with itself
(see \cite[section 3.4]{AndoBlumbergGepner} \cite[section 4.1.2]{dcct}).

\medskip
In the special case that $G \simeq \ast$ and hence $B G \simeq \ast$ then this reduces to an
ordinary Omega-spectrum, thought of as parameterized over the point.
On the other extreme, the space $B G$ we may think of as the 0-spectrum parameterized
over $B G$, by taking all the structure morphisms above to be equivalences.

\medskip
This way the parameterized spectrum $\mathcal{A}/\!/G$ sits in a homotopy fiber sequence
of the form
$$
  \raisebox{20pt}{
  \xymatrix{
    \mathcal{A}\ar[r] & \mathcal{A}/\!/G \ar[d]
    \\
    & B G
  }
  }
$$
in direct analogy with the unstable situation above. This now classifies twisted \emph{stable}
cohomology \cite[sect 4.1.2.1]{dcct}.

\medskip
Now suppose that $\mathcal{A}$ is a commutative ring spectrum. Then there
is an $\infty$-group
$$
  \mathrm{GL}_1(\mathcal{A}) := \Omega^\infty \mathcal{A} \times_{\pi_0(\mathcal{A})} \pi_0(\mathcal{A})^\times
$$
(the ``$\infty$-group of units'' \cite[p. 10]{AndoBlumbergGepner})
acting on $\mathcal{A}$ by the homotopy theoretic analog of the action of the group of units
of a commutative ring.
For the case that $\mathcal{A} = \mathrm{KU}$ is complex K-theory,
 $B GL_1(\mathrm{KU})$ receives a non-trivial map from $B^2 U(1)$
classifying the twist of K-theory by classes in degree-3 integral cohomology (\cite[p. 15]{AndoBlumbergGepner}).
Restricting the
$GL_1(\mathrm{KU})$-action along this inclusion hence exhibits an $\infty$-action of $B U(1)$
on complex K-theory
$$
  \xymatrix{
    \mathrm{K U}
    \ar[r]
    &
    \mathrm{K U}/\!/B U(1)
    \ar[d]
    \\
    & B^2 U(1)\;.
  }
$$
This means that maps $\tau : X \to B^2 U(1)$ (classifying $U(1)$-bundle gerbes) are twists for $K$-theory,
and that fixing one such twist $\tau$ then cocycles in $\tau$-twisted K-theory are
diagrams as on the left of the following:
$$
  \left\{
  \raisebox{20pt}{
  \xymatrix{
    X \ar[dr]_\tau^{\ }="t" \ar@{-->}[rr]^-c_{\ }="s" && \mathrm{KU}/\!/BU(1) \ar[dl]^{p_\rho}
    \\
    & B^2 U(1)
    %
  }
  }
  \right\}
  \;\;\;
  \simeq
  \;\;\;
  \left\{
  \raisebox{20pt}{
  \xymatrix{
    & P_\tau \underset{B U(1)}{\times} \mathrm{K U}
     \ar[d]
    \\
    X
    \ar@{=}[r]
    \ar@{-->}[ur]^-\sigma
    &
    X
  }
  }
  \right\}
  \,.
$$
On the right we see how this is equivalently given by sections of the
$\mathrm{KU}$-fiber bundle that is associated to the twist bundle $P_\tau$.
This is the perspective on twisted K-theory from \cite[section 3.4]{AndoBlumbergGepner}.
Here we need the equivalent perspective on the left.
The equivalence between the two perspectives is \cite[proposition 4.17]{NSS12}.

\medskip
Now we may compare the above diagrams for twisted theory  with the result for the descended $L_\infty$-cocycles of the type IIA
D-branes from Theorem \ref{descentoftheIIAcocycles}.

\medskip
\medskip
\begin{tabular}{cc}
 \fbox{
 $ \raisebox{20pt}{
  \xymatrix{
    && \mathrm{KU}
     \ar[d]^{\mathrm{hofib}(p_\rho)}
    \\
    X \ar[dr]_\tau^{\ }="t" \ar@{-->}[rr]^c_{\ }="s" && \mathrm{KU}/\!/BU(1) \ar[dl]^{p_\rho}
    \\
    & B^2 U(1)
    %
  }
  }
  $
  }
  &
  \fbox{
  $
  \raisebox{20pt}{
  \xymatrix{
    && \mathfrak{l}(\mathrm{ku})
    \ar[d]^{\mathrm{hofib}(\phi)}
    \\
    \mathbb{R}^{9,1\vert \mathbf{16}+ \overline{\mathbf{16}}}
    \ar[rr]^{\mu_{{}_{F1/D}}^{\mathrm{IIA}}}
    \ar[dr]_{\mu_{{}_{F1}}}
    &&
    \mathfrak{l}( \mathrm{ku}/\!/ BU(1) )
    \ar[dl]^{\phi}
    \\
    & b^2 \mathbb{R}
  }
  }
  $
  }
 \\
 \\
 \begin{tabular}{c}
   {\it A cocycle in twisted K-theory}
   \\
{\it   according to \cite{AndoBlumbergGepner, NSS12}.}
 \end{tabular}
 &
 \begin{tabular}{c}
  {\it The descended type IIA F1/D$p$-brane cocycle}
  \\
{\it  according to Theorem \ref{descentoftheIIAcocycles}.}
 \end{tabular}
\end{tabular}

\medskip
\medskip
It is clear that $\mathrm{CE}(\mathfrak{l}(\mathrm{ku}))$ is the Sullivan model for
$(\Omega^\infty \mathrm{KU})_{\mathbb{R}}$. Hence this shows that the descended super
$L_\infty$-cocycles for F1/D$p$-branes in type IIA take values in the rationalization
of the classifying space for twisted K-theory.

\appendix

\section{Basics of Rational Homotopy Theory}

Much of the difficulty in homotopy theory is determined by torsion phenomena (i.e. the appearance of additively nilpotent elements
in homotopy groups and cohomology groups).
And actually, torsion is really mysterious. The following a priori very surprising statement
 parametrizes the situation: there is no non-contractible simply connected manifold $X$ for which
the torsion part of the abelian homotopy groups $\pi_n(X)$ is known for all $n$. In other words, the knowledge of  torsion in homotopy is always approximate.
In contrast, the study of the free part of the homotopy groups can be done
in a systematic way \cite{Q, Su}.
One way of doing so is to notice that, if we write
\[
\pi_n(X)=\pi_n(X)_{\mathrm{free}}\oplus \pi_n(X)_{\mathrm{tors}}=\mathbb{Z}^{\mathrm{rank(\pi_n(X))}}\oplus \pi_n(X)_{\mathrm{tors}},
\]
then by tensoring with the field $\Q$ of rational numbers we kill the torsion part and get
$$
\pi_n(X) \otimes \Q=\Q^{\mathrm{rank(\pi_n(X))}}.
$$
Wouldn't it be nice if the right hand side were the actual homotopy group of a
space $X_{\Q}$ which would then be a rational stand-in for the space $X$? Indeed, every simply-connected space admits
such a \emph{rationalization} or \emph{$\Q$-localization} $X_{\Q}$, and this can be constructed in a functorial way.
Two spaces $X$ and $Y$ are then said to be rationally homotopy equivalent,
$X \sim_\Q Y$, if their rationalizations $X_\Q$ and $Y_\Q$ are homotopy
equivalent. Rational homotopy theory is the study of spaces up to rational
homotopy equivalence, i.e., informally, after stripping off torsion. All background on this can
be found in the excellent treatments in
\cite{BG, FHT, FHT2, FOT, GM, He, HMR}. Here we recall a few basics which may help the non-expert reader through the main body of this article.

\medskip
To a simply-connected topological space $X$ one may associate an
algebraic object, called a minimal model for $X$, such that
$X\sim_\Q Y$ if and only if the minimal models for $X$ and $Y$ are
isomorphic.
The reason one restricts to simply connected spaces is that one needs the fundamental group $\pi_1(X)$ to act trivially on all the homotopy groups $\pi_n(X)$. This condition (which in particular implies that $\pi_1(X)$ needs to be abelian) is trivially satisfied by simply connected spaces; however there are remarkable examples of non-simply connected spaces satisfying it, notably non-simply connected topological groups. In what follows in this Appendix we will keep assuming that the spaces we will be dealing with are simply connected, but the reader should keep in mind that the results hold more generally for spaces with a trivial $\pi_1$-action.

\medskip
There are two main approaches to algebraic minimal models for topological spaces, the first
cohomological and the second homological:
\begin{enumerate}
\item Sullivan \cite{Su} uses  differential graded commutative algebras (DGCAs),
\item Quillen \cite{Q} uses differential graded Lie algebras (or more generally $L_\infty$-algebras).
\end{enumerate}
The two approaches are dual to each other, via the Koszul duality between  differential graded commutative algebras and differential graded Lie algebras. In particular, we will denote by $\mathfrak{l}(X)$ the differential graded Lie algebra (or more generally $L_\infty$-algebra) minimal model (or \emph{Quillen model}) for the simply connected space $X$, so that the  differential graded commutative algebra minimal model (or \emph{Sullivan model}) $A_X$ for $X$ will be $\mathrm{CE}(\mathfrak{l}(X))$, where $\mathrm{CE}(\mathfrak{g})$ denotes the Chevalley-Eilenberg algebra of an $L_\infty$-algebra $\mathfrak{g}$.

\medskip
Differential graded commutative algebras arising as Sullivan minimal models of simply connected spaces are peculiar among all DGCAs. Namely, forgetting the differential, they are
free as graded commutative algebras. In other words, forgetting the differential, Sullivan minimal models are graded polynomial algebras. This means that there exists a graded vector space $V_X$ such that $A_X=\wedge^\bullet V_X$ as graded commutative algebra. The graded vector space $V_X$ is the graded vector space of \emph{generators} for $A_X$. Differential graded commutative algebras of this kind are usually called \emph{semi-free}
or \emph{quasi-free} DGCAs in the mathematical literature. Sullivan minimal models of simply connected spaces have another couple of special features: there are no degree 1 generators and the differential of each generator $x$ is a polynomial in generators $y$ with $\deg(y)<\deg(x)$. An abstract semifree DGCA with these two features is usually called a \emph{Sullivan algebra}.

\medskip
As the association $X\rightsquigarrow A_X$ is contravariant, if $f\colon X\to Y$ is a continuous map, then we have an induced morphism $f^*\colon A_Y\to A_X$ between the Sullivan models. In particular, if $p\colon X\to Y$ is a fibration, then $p^*$ is an inclusion of a special kind: one has $p^*\colon A_Y\hookrightarrow A_Y\otimes_{\Q}\wedge^\bullet V_F\cong  A_X$, where $V_F$ is a graded vector space associated with the fiber $F$ of the fibration $p$. Moreover, the differential in $A_X$ maps a generator $x$ in $V_F$ to a polynomial with coefficients in $A_Y$ in the generators $y$ from $V_F$ with $\deg(y)<\deg(x)$. Abstracting this situation to arbitrary semifree DGCAs one gets the notion of \emph{relative Sullivan algebra}. In the dual picture, these correspond to fibrations between differential graded Lie algebras (or, more gnerally, $L_\infty$-algebras). Notice how a Sullivan algebra $A$ is a relative Sullivan algebra for the inclusion $\mathbb{Q}\hookrightarrow A$. Geometrically, this corresponds to the fibration $X\to *$, where $*$ is the topological space consisting of a single point. In the differential Lie algebra picture, this is the fibration $\mathfrak{g}\to 0$.

\medskip

The differential graded commutative algebra $A_X$ captures all of the homotopy type of $X$ up to torsion. In particular, if $X$ admits a PL-manifold (piece-wise linear) structure, then $A_X$ is quasi-isomorphic to the DGCA  $A_{\mathrm{PL}}(X;\mathbb{Q})$ of piecewise linear differential forms with rational coefficients on $X$. This implies that one has an isomorphism of graded commutative algebras
\[
H^\bullet(A_X)\cong H^\bullet(X;\mathbb{Q}).
\]
Moreover, the subspace $V_X$ of linear generators is canonically identified with the linear dual of rationalized homotopy groups of $X$:
\[
V_X\cong  \bigoplus_{n\in \mathbb{N}} {\rm Hom}_\Z (\pi_n(X), \Q)\;,
\]
provided that $X$ is simply connected and has rational
homology of finite type. This is needed so that  the
corresponding loop space is connected. Notice that this implies that the degree $n$ component of $V_X$ is (noncanonically) isomorphic to $\pi_n(X)\otimes \Q$

\medskip
By extending scalars to $\R$, i.e., by considering the differential graded commutative algebra $A_{X;\R}=A_X\otimes_{\Q}\R=(\wedge^\bullet V_X\otimes_{\Q}\R,d)$, one gets a Sullivan $\mathbb{R}$-algebra with $A_{X;\R}\simeq A_{\mathrm{PL}}(X;\R)$ (and so, in particular, such that $H^\bullet(A_{X;\R})\cong H^\bullet(X;\R)$) and with the degree $n$ component of $V_X\otimes_{\Q}\R$ isomorphic to $\pi_n(X)\otimes_{\mathbb{Z}} \R$. One says that $A_{\R}(X)$ represents the real homotopy type of $X$. When $X$ is a smooth manifold, as are the spaces considered in this paper, the algebra $A_{\mathrm{PL}}(X;\R)$ of piecewise linear forms is quasi-isomorphic to the de Rham algebra $\Omega^\bullet(X;\mathbb{R})$ of dfferential forms, so that $A_{X;\R}$ is a Sullivan model for $\Omega^\bullet(X;\mathbb{R})$ in this case. This is a stronger statement than saying that the cohomology of $A_{X;\R}$ is isomorphic to the de Rham cohomology of $X$.

\medskip
Informally speaking, what the quasi-isomorphism $A_{X;\R}\simeq \Omega^\bullet(X;\mathbb{R})$ gives, is a choice of representative differential forms for the generating cohomology classes of $X$, together with a choice of differential forms representing the algebraic relations between cohomology classes. For physics purposes, this is precisely what one wants: to identify representatives for cohomology classes on the nose, in the spirit and philosophy of  differential cohomology.
For instance, supergravity fields and their dynamics are usually captured by
differential forms and can be promoted to
de Rham cohomology or rational cohomology when taking the gauge
structure into account. Hence it makes sense to
aim to capture this systematically via rational homotopy theory, which is what we do in the present paper.
As we tried to suggest in this short Appendix, the idea is that describing topological aspects of a space is sometimes
easier via algebras, provided we are working rationally.

\medskip
As a matter of notation, since we will always be working over the reals, in the main body of the paper we simply write $A_X$ and $\mathfrak{l}(X)$ for $A_{X;\mathbb{R}}$ and $\mathfrak{l}(X)\otimes_{\mathbb{Q}}\R$, respectively.

\section{Spinors}

Our spinor conventions are as in \cite[II.7.1]{CDF}, except for the first two points in def. \ref{basicconventions} to follow, where we use the opposite signs.
This means that our Clifford matrices behave just as in \cite[II.7.1]{CDF}, the only difference is in a sign when raising a
spacetime index or lowering a spacetime index.
\begin{defn}\label{basicconventions}
{\bf (i)} The Lorentzian spacetime metric is $\eta := \mathrm{diag}  (-1,+1,+1,+1,\cdots)$.
\item
{\bf (ii)} The Clifford algebra relation is $\Gamma_a \Gamma_b + \Gamma_b \Gamma_a  = - 2 \eta_{a b}$;
\item
{\bf (iii)} The timelike index is $a = 0$, the spacelike indices range $a \in \{1,\cdots, d-1\}$.
\item
{\bf (iv)} A unitary Dirac representation of $\mathrm{Spin}(d-1,1)$ is on
$\mathbb{C}^{\nu}$ where $d  \in \{2\nu, 2\nu +1\}$,  via
Clifford matrices such that $\Gamma_0^\dagger = \Gamma_0$ and $\Gamma_a^\dagger = - \Gamma_a^\dagger$
for $a \geq 1$.
\item
{\bf (v)} For $\psi \in \mathrm{Mat}_{\nu \times 1}(\mathbb{C})$ a complex spinor, we write
$\overline{\psi} := \psi^\dagger \Gamma_0$
for its Dirac conjugate. If we have Majorana spinors forming a real sub-representation $S$
then restricted to these the Dirac conjugate coincides with the Majorana conjugate
$\psi^\dagger \Gamma_0 = \psi^T C$ (where $C$ is the Charge conjugation matrix).
\end{defn}
As usual we write
$$
  \Gamma_{a_1 \cdots a_p}
    :=
  \tfrac{1}{p!}
  \sum_{\mbox{\tiny permutations $\sigma$}}
    (-1)^{\vert \sigma\vert}
    \Gamma_{a_{\sigma(1)}} \cdots \Gamma_{a_{\sigma(p)}}
$$
for the anti-symmetrization of products of Clifford matrices.
  These conventions imply that all $\Gamma_a$ are self-conjugate with respect to the pairing $\overline{(-)}(-)$, hence
  that
  $$
    \left(\overline{\psi} \Gamma_{a_1\cdots a_p} \psi\right)^\ast
      =
    (-1)^{p(p-1)/2}\,  \overline{\psi} \Gamma_{a_1 \cdots a_p} \psi
  $$
  holds for all $\psi$.
  This means that the following expressions are real numbers
  $$
     \overline{\psi}\psi
\;, \quad
     \overline{\psi}\Gamma_a \psi
  \;, \quad
     i  \overline{\psi}\Gamma_{a_1 a_2} \psi
  \;, \quad
     i  \overline{\psi}\Gamma_{a_1 a_2 a_3}\psi
   \;, \quad
      \overline{\psi}\Gamma_{a_1\cdots a_4} \psi
    \;, \quad
      \overline{\psi}\Gamma_{a_1\cdots a_5} \psi
    \;, \quad
      \cdots
      \;.
  $$
\begin{defn}\label{SuperMinkowski}
  Given $d\in \mathbb{N}$ and $N$ a real $\mathrm{Spin}(d-1,1)$-representation
  (hence some direct sum of Majorana and Majorana-Weyl representations), the
  corresponding super-Minkowski super Lie algebra
  $$
    \mathbb{R}^{d-1,1\vert N}
      \;\;
      \in
      \mathrm{sLieAlg}_{\mathbb{R}}
  $$
  is the super Lie algebra defined by the fact that its Chevalley-Eilenberg algebra
  is the $(\mathbb{N}, \mathbb{Z}/2)$-bigraded differential-commutative differential algebra
  generated from elements $\{e^a\}_{a = 0}^{d-1}$ in bidegree $(1,\mathrm{even})$
  and from elements $\{\psi^\alpha\}_{\alpha = 1}^{\mathrm{dim}N}$ in bidegree $(1,\mathrm{odd})$
  with differential given by
  $$
    d\psi^\alpha = 0
    \,,
    \qquad
    d e^a = \overline{\psi} \wedge \Gamma^a \psi
    \,.
  $$
  Here on the right we use the spinor-to-vector bilinear pairing, regarded as a super 2-form, i.e.
  in terms of the charge conjugation matrix $C$ this is
  $$
    \overline{\psi} \wedge \Gamma^a \psi
      =
    (C \Gamma^a)_{\alpha \beta} \, \psi^\alpha \wedge \psi^\beta
    \,,
  $$
  where summation over repeated indices is understood.
\end{defn}
\begin{remark}
  \label{dea}
Notice that we omit a factor of $\tfrac{1}{2}$ in the expression for $d e^a$, compared to
the convention in \cite{CDF, CAIB99}.
\end{remark}


\vspace{1cm}
\noindent {\large \bf Acknowledgement}

\medskip

We thank Jonathan Pridham for discussion of fibrations of $L_\infty$-algebras.

\noindent H.S. thanks the Mathematics Department at Rome University La Sapienza,
Insitut Henri Poincar\'e in Paris, and Instituto Superior T\'ecnico, Universidade de Lisboa,
for kind hospitality during the writing of this paper.

\noindent U. S. thanks the Max Planck Institute for Mathematics in Bonn and the Instituto Superior T\'ecnico, Universidade de Lisboa for
kind hospitality during the writing of this paper. U.S. was supported by RVO:67985840.

\bibliography{F1Dp-revised}
\bibliographystyle{plain}

\end{document}